\documentclass[journal,twoside]{ieeetran}
\usepackage{cite}
\usepackage{amsmath,amssymb,amsfonts}
\usepackage{amsthm}
\usepackage{algorithmic}
\usepackage{graphicx}
\usepackage{textcomp}
\def\BibTeX{{\rm B\kern-.05em{\sc i\kern-.025em b}\kern-.08em
    T\kern-.1667em\lower.7ex\hbox{E}\kern-.125emX}}
\title{I.A. Raptis: Observability of Linear Time-Invariant Systems with Relative Measurements: A Geometric Approach}

\theoremstyle{plain}
\newtheorem{thm}{\protect\theoremname}
\theoremstyle{plain}
\newtheorem{cor}[thm]{\protect\corollaryname}
\theoremstyle{plain}
\newtheorem{prop}[thm]{\protect\propositionname}
\theoremstyle{definition}
\newtheorem{defn}[thm]{\protect\definitionname}
\theoremstyle{plain}
\newtheorem{lem}[thm]{\protect\lemmaname}

\providecommand{\corollaryname}{Corollary}
\providecommand{\definitionname}{Definition}
\providecommand{\lemmaname}{Lemma}
\providecommand{\propositionname}{Proposition}
\providecommand{\theoremname}{Theorem}

\newcommand{\fatone}{\mathbf{1}}
\newcommand{\fatzero}{\mathbf{0}}

\newcommand{\real}{\mathbb{R}}

\begin{document}
\title{Observability of Linear Time-Invariant Systems with Relative Measurements: A Geometric Approach}
\author{Ioannis~A.~Raptis
\thanks{I. Raptis is with the Department
of Electrical and Computer Engineering, North Carolina Agricultural and Technical State University, Greensboro,
NC, 27411 USA e-mail: iraptis@ncat.edu}
\thanks{A short, preliminary version of this paper is included in the Proceedings of the 2019 ASME Dynamic Systems and Control Conference \cite{raptis2019observability}.}
}

\maketitle

\begin{abstract}
This paper explores the observability and estimation capability of
dynamical systems using predominantly relative measurements of the
system's state-space variables, with minimal to no reliance on absolute
measurements of these variables. We concentrate on linear time-invariant
systems, in which the observation matrix serves as the algebraic representation
of a graph object. This graph object encapsulates the availability
of relative measurements. Utilizing algebraic graph theory and abstract
linear algebra (geometric) tools, we establish a link between the
structure of the graph of relative measurements and the system-theoretic
observability subspace of linear systems. Special emphasis is given
to multi-agent networked systems whose dynamics are governed by the
linear consensus protocol. We demonstrate the importance of absolute
information and its placement to the system's dynamics in achieving
full-state estimation. Finally, the analysis shifts to the synthesis
of a distributed observer with relative measurements for single integrator
dynamics, exemplifying the relevance of the preceding analytical findings.
We support our theoretical analysis with numerical simulations.
\end{abstract}

\begin{IEEEkeywords}
linear systems,multi-agent dynamical systems, geometric methods, observability,
distributed observer synthesis 
\end{IEEEkeywords}

\section{Introduction}
\IEEEPARstart{O}{ne} standard functionality of control systems is the estimation of
characteristic quantities of a system or process that changes with
time based on a series of measurements. This core problem involves
a computational unit that collects sensor data from the monitored
process and, in real-time, executes an algorithm to produce informed
estimates of relevant variables. Therefore, the culmination of every
estimation method for dynamical systems is an estimation algorithm:
a ``filter'' in stochastic systems, and an ``observer'' in deterministic
ones. Kalman's filter \cite{kalman1960new} is the most renowned in
this domain, while Luenberger laid the ground work for the design
of observers \cite{luenberger1964observing}. A powerful tool in the
synthesis of observers are the geometric methods \cite{hespanha2018linear,chen2012linear},
which leverage the linear algebraic properties of linear systems'
matrices to determine system theoretic properties and synthesize feedback
controllers, disturbance decouplers, and, state observers. Key milestones
in the field of geometric methods for linear systems have been marked
by Basile and Maro \cite{basile1969controlled}, Wonham \cite{wonham1974linear},
and, more recently, Trentelman \cite{trentelman2012control} in their
respective research. This rudimentary outline of the estimation challenge
gives rise to two correlated facets.

The first facet emerges when the sensor data offer relative, rather
than absolute, measurements \cite{Barooah2007}. A characteristic
example is the localization of a team of robots where the units exchange
relative measurements of bearings and range from their onboard exteroceptive
(e.g., laser camera) and interoceptive (e.g., IMU, encoders, etc.)
sensors. Similar problems arise in the formation of satellites \cite{Rahmani2008}
and clock synchronization in networks of processors \cite{barooah2007estimation}. 

The second facet arises when replacing the central processor, which
traditionally aggregates measurements, with a distributed network
of processor-sensor nodes. In this setup, each node (or agent) gathers
its own measurements and those from adjacent nodes to formulate a
local estimate of a system variable. This is the classical problem
of distributed estimation. It becomes particularly compelling when
the estimated system also consists of interacting dynamic nodes, e.g.,
the well-known agreement dynamics. A substantial portion of research
in distributed estimation focuses on analyzing and synthesizing distributed
versions of the Kalman filter \cite{olfati2007distributed,mahmoud2013distributed,olfati2005distributed,carli2008distributed,khan2008distributing}.
This involves examining the interaction between the dynamics of the
distributed filter (or node), the availability of local measurements,
and the nature of information exchanged with communicating nodes.
Distributed Luenberger observers \cite{kim2019completely,wang2017distributed,mitra2018distributed,park2016design}
are distinguished by adapting the classic observer concept to suit
the nuances of networked or multi-agent systems. Primary objectives
include ensuring the stability of the observer network and devising
methods to minimize the information required by the distributed observers.

The two subproblems, intriguingly, may intertwine. For instance, the
localization of a robotic swarm or mobile sensor network could occur
in a distributed fashion, rather than centrally. In this scenario,
each robot, or mobile sensor, computes its absolute position by relying
on the relative positions and information exchanged with neighboring
robots \cite{safavi2018distributed,langendoen2003distributed,roumeliotis2002distributed,piovan2013frame,martinez2007motion}.
This challenge embodies the integration of both facets: the relative
measurement aspect and the distributed estimation element. 

In this work, we delve into the observability of linear time-invariant
(LTI) dynamic systems, which predominantly rely on measurements of
differences between state variables, with few or no direct measurements
of the state variables themselves. The framework for representing
these relative observations is efficiently captured through a graph-based
model and its associated incidence matrix. We establish a set of mathematical
propositions that define necessary and sufficient conditions for the
observability of this specific category of dynamic systems. Further,
our analysis expands to include the observability of multi-agent dynamical
systems, which display a graph-associated structural form. We concentrate
on the agreement dynamics, specifically in scenarios where the graph
of the consensus protocol corresponds with that of the relative measurements.
We also provide a detailed exposition of how limited global measurements
influence the observability of the agreement dynamics. Additionally,
we propose a novel design for a distributed observer tailored for
a single-integrator multi-agent dynamical system using the system
theoretical properties previously delineated in the paper. 

The principal contribution of this work is a geometric analysis of
the observability problem using relative measurements, creating a
toolkit at the crossroads of linear algebra and algebraic graph theory.
This approach culminates in linking the spectral characteristics of
the graph, which represents the available relative measurements, with
specific subspaces generated by the state-space matrices of the dynamic
system. Through this linkage, we identify essential principles that
establish conditions for the system's well-understood concept of observability.
We demonstrate that geometric methods for analyzing systems with relative
measurements, and more broadly dynamical systems incorporating a structural
graph abstraction, enhance the role of \textquotedblleft structure\textquotedblright{}
in mathematical analysis. This focus significantly reduces reliance
on matrix manipulation. Furthermore, this methodology facilitates
the ergonomic synthesis of distributed observers: rather than immediately
pursuing an observer design that conforms to traditional frameworks,
we can initially assess solvability and existence based on the system's
dynamics and the measurement graph, subsequently constructing the
observer if feasible. In essence, our goal is to develop a method
that prioritizes synthesis---focusing on structure---over design,
which typically involves determining numerical values for free parameters.

The structure of the paper is as follows: Section \ref{sec:MathPrelim}
provides mathematical preliminaries and notation pertinent to the
problem. Section \ref{sec:Problem-Statement} presents a detailed
description of the problem addressed in this paper. In Section \ref{sec:SpecialMatrices},
we delve into a series of propositions concerning the observability
of LTI systems monitored by relative measurement models associated
with specific graphs. The core contribution is articulated in Section
\ref{sec:GeneralRSS}, which delineates conditions for the observability
of generic LTI systems and the agreement dynamics based on relative
observations of a general structure. Conditions for observability
of the agreement dynamics with a single absolute measurement are explored
in Section \ref{sec:SingleMeas}. The methodology for designing a
distributed observer for single-integrator multi-agent systems is
outlined in Section \ref{sec:Distributed-Estimators}. Finally, Section
\ref{sec:Simulations} includes validating numerical simulations to
support the theoretical findings.

\section{Notation, Terminology, and Mathematical Preliminaries\label{sec:MathPrelim}}
This section succinctly revisits essential concepts and notations
from linear algebra, matrix analysis, and algebraic graph theory,
which are integral to our subsequent discussions. Readers may consult
standard linear algebra texts such as Axler \cite{axler2015}, Horn
and Johnson \cite{horn2012}, and Garcia and Horn \cite{garcia2017second},
graph theory references like Godsil and Royle \cite{godsil2013algebraic},
Bollobás \cite{bollobas2013}, and foundational works on networked
control systems including Mesbahi and Egerstedt \cite{mesbahi2010graph},
Bullo et al. \cite{bullo2009distributed}, and Olfadi-Saber et al.
\cite{olfati2007consensus}.

\textit{Notational Conventions:} The set $\mathbb{R}^{n}$ denotes
the n-dimensional real coordinate space, comprising all n-tuples of
real numbers. A column vector $x\in\mathbb{R}^{n}$ with components
$x_{1},x_{2},\ldots,x_{n}$ is represented as $x=[x_{1};x_{2};\ldots;x_{n}]$.
The transpose of the column vector $x$, denoted by $x^{T}$, is expressed
as $[x_{1},x_{2},\ldots,x_{n}]$. The symbol $\mathbf{0}_{n}$ signifies
an $n$-dimensional column vector of zeros, while $\mathbf{1}_{n}$
indicates a similar vector but with ones. The $n\times n$ zero and
identity matrices are denoted by $O_{n}$ and $I_{n}$, respectively.
The notation $|\cdot|$ refers to either the cardinality of a set
or the absolute value of a real number, contextually.

\textit{Linear Algebra:} Consider a set of vectors $V=\{v_{1},v_{2},\ldots,v_{m}\}$
in a finite-dimensional vector space $\mathcal{V}$. The span of $V$,
denoted by $\text{span}(V)$, is the set of all possible linear combinations
of the vectors in $V$ and forms a subspace of $\mathcal{V}$, i.e.,
$\text{span}(V)\subseteq\mathcal{V}$. The set $V$ is a basis of
$\mathcal{V}$ if and only if its elements are linearly independent
and its span covers $\mathcal{V}$. In any finite-dimensional vector
space, all bases have the same cardinality, defining the dimension
of the space, denoted as $\dim(\mathcal{V})$.

\textit{Matrix Algebra:} An $m\times n$ real matrix $A$ defines
three fundamental subspaces in $\mathbb{R}^{n}$: the column space
(or image), the row space (or coimage), and the nullspace (or kernel)
of $A$, denoted by $\text{null}(A)$. For $x\in\text{null}(A)$,
it holds that $Ax=\mathbf{0}_{n}$. The rank of $A$, $\text{rank}(A)$,
is the dimension of its column space, while its nullity, $\text{nullity}(A)$,
is the dimension of its nullspace. The rank-nullity theorem asserts
that for $A$, $\text{rank}(A)+\text{nullity}(A)=n$. In the case
of a square matrix $A$ with eigenvalue $\lambda$, the associated
eigenspace in $\mathbb{R}^{n}$ is the span of the eigenvectors corresponding
to $\lambda$.

\textit{Graph Theory:} An undirected graph is a pair $\mathcal{G}=(\mathcal{V},\mathcal{E})$,
with $\mathcal{V}$ representing vertices and $\mathcal{E}$, a set
of unordered vertex pairs or edges. In a directed graph, edges are
ordered pairs. Vertices $i$ and $j$ connected by an edge $(i,j)\in\mathcal{E}$
are adjacent or neighboring. The neighborhood $\mathcal{N}(i)$ of
vertex $i$ comprises its adjacent vertices, and its degree, $d(i)$,
is the count of such vertices. A graph is connected if every pair
of vertices is linked by a path; otherwise, it is disconnected.

\textit{Algebraic Graph Theory:} This field links graphs with algebraic
structures like matrices. For an undirected graph $\mathcal{G}$ with
$n$ vertices, the adjacency matrix $\mathcal{A}(\mathcal{G})$ is
defined such that $\mathcal{A}(\mathcal{G})_{ij}=1$ if vertices $i$
and $j$ are connected, and $0$ otherwise. The degree matrix $\Delta(\mathcal{G})$
has diagonal entries $\Delta(\mathcal{G})_{ii}=d(i)$ and zeros elsewhere.
The graph Laplacian, $\mathcal{L}$, is $\Delta(\mathcal{G})-\mathcal{A}(\mathcal{G})$.
The Laplacian's rows sum to zero, and it is symmetric and positive
semidefinite. The eigenvalue $\lambda_{1}(\mathcal{G})=0$ corresponds
to the eigenvector $\mathbf{1}_{n}$, placing it in $\mathcal{L}$'s
nullspace. For a connected graph, the nullspace is one-dimensional,
and $\lambda_{2}(\mathcal{G})$ is the smallest positive eigenvalue.

The incidence matrix $\mathcal{D}(\mathcal{G})$ for a directed graph
with $\epsilon$ edges is defined with elements
\[
\mathcal{D}(\mathcal{G})_{ij}=\begin{cases}
-1, & \text{if vertex \ensuremath{i} is the tail of edge \ensuremath{j};}\\
\quad1, & \text{if vertex \ensuremath{i} is the head of edge \ensuremath{j};}\\
\quad0, & \text{otherwise}.
\end{cases}
\]

This matrix is also applicable to undirected graphs given an arbitrary
orientation. Its column sums are zero, reflecting the graph's topology. 

\section{Problem Statement\label{sec:Problem-Statement}}
We examine the observability of Linear Time-Invariant (LTI) systems
predominantly through relative measurements. For ease of reference,
we term this observation setup a Relative Sensing System (RSS). In
RSS, the nodes are equipped to measure the differences between pairs
of state variables, termed \emph{relative measurements}, along with
certain state variables, known as \emph{absolute} or \emph{anchor
measurements}, of an LTI system. Our primary aim is to determine the
conditions under which the LTI system is observable when the available
observations mainly consist of relative measurements of the system's
states, complemented by a minimal number (potentially as few as one)
or even in the absence of anchor measurements. 

To start formulating the problem at hand, consider the customary LTI
system
\begin{align}
\dot{x} & =Ax+Bu\label{eq:LTI-1}\\
y & =Cx,\nonumber 
\end{align}
where $x\in\real^{n}$ is the LTI system's state vector
$\left[x_{1};x_{2};\ldots;x_{n}\right]$, $u\in\real^{l}$ is its
input vector $\left[u_{1};u_{2};\ldots;u_{l}\right]$, and $y\in\real^{m}$
is its observation vector $\left[y_{1};y_{2};\ldots;y_{m}\right]$.
The matrices $A\in\real^{n\times n}$, $B\in\real^{n\times l}$, and
$C\in\real^{m\times n}$ are the state, input, and observation matrices
of the LTI system, respectively. From standard results, the observability
matrix, $\mathcal{O}$, of the LTI system (\ref{eq:LTI-1}) is defined
by
\begin{equation}
\mathcal{O}=\left[\begin{array}{c}
C\\
CA\\
CA^{2}\\
\vdots\\
CA^{n-1}
\end{array}\right].\label{eq:observability_matrix}
\end{equation}

For brevity, the LTI system (\ref{eq:LTI-1}), when we address its
observability property, is considered equivalent to the matrix pair
$(A,C)$; that is, if the system (\ref{eq:LTI-1}) is observable,
it is the same to say that the pair $(A,C)$ is observable. 

The available relative measurements of the RSS are conveniently represented
by a directed graph $\mathcal{G}=\left(\mathcal{V},\mathcal{E}\right)$.
The node set $\mathcal{V}$ consists of the states of the LTI system,
and the edge set $\mathcal{E}$ represents the availability of relative
measurements. For example, the edge $(i,j)\in\mathcal{E}$ between
the nodes $i$ and $j$ states that sensing node $i$ can measure
the difference $x_{j}-x_{i}$, and vice versa. When this is the case,
the incidence matrix, $\mathcal{D}$, of graph $\mathcal{G}$ is used
to capture all the relative measurements of the RSS with the vector
$\mathcal{D}^{T}x$. With this representation, the graph $\mathcal{G}$
determines the elements of the observation matrix $C$. We say that
the LTI system (\ref{eq:LTI-1}) is \emph{monitored} by an RSS; we
refer to the graph $\mathcal{G}$ ---which captures the relative
measurements of the RSS--- as the \emph{relative observations graph},
matrix $\mathcal{D}^{T}$ as the\emph{ relative observations matrix},
and vector $\mathcal{D}^{T}x$ as the \emph{relative observation vector}.
It is noted that the orientation of the graph does not impact the
subsequent analysis; hence, the signs of the incident matrix elements,
can be arbitrarily chosen. 

Before proceeding to the main results of this paper, we outline a
number of preparatory theorems related to the observability of LTI
systems. About the notation, consider the generic representation of
the LTI system (\ref{eq:LTI-1}). The first theorem involves the standard
definition of observability. 
\begin{thm}
\label{thm:observability-matrix}The matrix pair $(A,C)$ is observable
if and only if the rank of the $nm\times n$ observability matrix
$\mathcal{O}$ in (\ref{eq:observability_matrix}) is $n$.
\end{thm}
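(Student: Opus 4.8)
The plan is to derive Theorem~\ref{thm:observability-matrix} from the standard definition of observability---uniqueness of the initial state given the input and output over a time interval---by first reducing to an analytic condition on the unforced response and then converting that analytic condition into the algebraic rank condition via the Cayley--Hamilton theorem. First I would exploit linearity of \eqref{eq:LTI-1}: the output decomposes into a zero-input component $Ce^{At}x(0)$ and a zero-state component depending on $u(\cdot)$ alone, and the latter is common to every candidate initial state; hence recovering $x(0)$ from $y(\cdot)$ and $u(\cdot)$ is equivalent to recovering it from the autonomous response $y(t)=Ce^{At}x(0)$. Two initial conditions $x_0$ and $x_0'$ yield the same output on $[0,T]$ precisely when $z:=x_0-x_0'$ satisfies $Ce^{At}z=\mathbf{0}_m$ for all $t\in[0,T]$, so the pair $(A,C)$ is observable if and only if the only such $z$ is $\mathbf{0}_n$.

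Next I would show that the ``unobservable set'' $\mathcal{U}:=\{z\in\real^{n}:Ce^{At}z=\mathbf{0}_m\ \forall t\in[0,T]\}$ coincides with $\text{null}(\mathcal{O})=\bigcap_{k=0}^{n-1}\text{null}(CA^{k})$. For $\mathcal{U}\subseteq\text{null}(\mathcal{O})$, differentiate the identity $Ce^{At}z=\mathbf{0}_m$ repeatedly in $t$ and evaluate at $t=0$ to obtain $CA^{k}z=\mathbf{0}_m$ for every $k\ge 0$, in particular for $k=0,1,\ldots,n-1$. For the reverse inclusion, take $z$ with $CA^{k}z=\mathbf{0}_m$ for $k=0,\ldots,n-1$; by the Cayley--Hamilton theorem every power $A^{k}$ with $k\ge n$ is a real-coefficient polynomial in $I_{n},A,\ldots,A^{n-1}$, so $CA^{k}z=\mathbf{0}_m$ for all $k\ge 0$, and substituting into the power series $e^{At}=\sum_{k\ge 0}t^{k}A^{k}/k!$ gives $Ce^{At}z=\mathbf{0}_m$ for all $t$, i.e. $z\in\mathcal{U}$.

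Combining the two steps, the pair $(A,C)$ is observable if and only if $\text{null}(\mathcal{O})=\{\mathbf{0}_n\}$, equivalently $\text{nullity}(\mathcal{O})=0$; since $\mathcal{O}$ has $n$ columns, the rank--nullity theorem quoted in Section~\ref{sec:MathPrelim} yields $\text{rank}(\mathcal{O})=n$, which is the claim. The only delicate point is the passage from the analytic condition on the whole interval $[0,T]$ to the finite family of matrix equations: this rests on the real-analyticity of $t\mapsto Ce^{At}z$, so that vanishing on an interval forces all derivatives at $t=0$ to vanish, together with Cayley--Hamilton, which collapses the a priori infinite family $\{CA^{k}z=\mathbf{0}_m\}_{k\ge 0}$ to its first $n$ members. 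Everything else is routine bookkeeping with the rank--nullity theorem.
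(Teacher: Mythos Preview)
Your argument is correct and is the standard textbook derivation; the paper itself does not actually prove this result but merely refers the reader to a linear systems text (Hespanha, p.~144), where essentially this same proof appears.
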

\begin{proof}
The proof of this basic theorem can be found in any textbook of linear
systems theory, for example in \cite[pg.144]{hespanha2018linear}.
\end{proof}
The following propositions associate the eigenvectors of the state
matrix $A$ and the rank of the observation matrix $C$ with the observability
of the LTI system, and will be used extensively in the succeeding
analysis.
\begin{thm}
\label{thm:right_eigen}The matrix pair $(A,C)$ is not observable
if and only if there exists an eigenvector $\nu\in\real^{n}$ of $A$
such that 
\[
C\nu=\fatzero_{m}.
\]
\end{thm}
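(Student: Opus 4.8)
The plan is to prove this via the Popov--Belevitch--Hautus (PBH) eigenvector test, which is the standard route, but since the paper is building a self-contained geometric development I would derive it directly from Theorem~\ref{thm:observability-matrix}. I would argue the contrapositive in both directions: $(A,C)$ is \emph{not} observable if and only if $\mathcal{O}$ has a nontrivial nullspace, so it suffices to show that $\text{null}(\mathcal{O})$ is nonzero precisely when $A$ has an eigenvector $\nu$ with $C\nu=\fatzero_m$.

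For the ``if'' direction, suppose $\nu\neq\fatzero_n$ is an eigenvector of $A$ with $A\nu=\lambda\nu$ and $C\nu=\fatzero_m$. Then $CA^k\nu=\lambda^k C\nu=\fatzero_m$ for every $k\ge 0$, so stacking these gives $\mathcal{O}\nu=\fatzero_{nm}$; hence $\nu\in\text{null}(\mathcal{O})$, $\text{rank}(\mathcal{O})<n$, and by Theorem~\ref{thm:observability-matrix} the pair is not observable. One subtlety: if $\lambda$ is complex the eigenvector lives in $\mathbb{C}^n$; I would note that one may equivalently work over $\mathbb{C}$ throughout, or observe that a real matrix without a real eigenvector-in-the-kernel still needs handling — but in fact the cleanest fix is to state and prove the theorem over $\mathbb{C}$ and remark that rank conditions are field-independent, so "there exists an eigenvector $\nu$" should be read as allowing $\nu\in\mathbb{C}^n$.

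For the ``only if'' direction — the one that carries the real content — assume $(A,C)$ is not observable, so by Theorem~\ref{thm:observability-matrix} there is $w\neq\fatzero_n$ with $\mathcal{O}w=\fatzero$, i.e. $CA^kw=\fatzero_m$ for $k=0,1,\dots,n-1$. The key fact is that the \emph{unobservable subspace} $\mathcal{W}:=\bigcap_{k=0}^{n-1}\text{null}(CA^k)=\bigcap_{k\ge 0}\text{null}(CA^k)$ is $A$-invariant: if $v\in\mathcal{W}$ then $CA^k(Av)=CA^{k+1}v=\fatzero_m$ for all $k\ge 0$ (using the Cayley--Hamilton theorem to extend from $k\le n-1$ to all $k$), so $Av\in\mathcal{W}$. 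Now $\mathcal{W}$ is a nonzero $A$-invariant subspace, so the restriction $A|_{\mathcal{W}}$ has an eigenvector $\nu\in\mathcal{W}$ (over $\mathbb{C}$, every nonzero finite-dimensional invariant subspace contains an eigenvector); this $\nu$ is an eigenvector of $A$ and, being in $\mathcal{W}\subseteq\text{null}(C)$, satisfies $C\nu=\fatzero_m$.

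The main obstacle, and the only place that needs care, is the $A$-invariance of the unobservable subspace together with the existence of an eigenvector inside it: the invariance hinges on Cayley--Hamilton to promote $\{CA^k v=0:\,0\le k\le n-1\}$ to all powers $k$, and the eigenvector existence is where one must either pass to $\mathbb{C}$ or accept complex eigenvectors in the statement. Everything else is immediate linear algebra, so I would keep the write-up short: set up the contrapositive, do the easy direction in two lines, then spend the bulk on defining $\mathcal{W}$, proving invariance via Cayley--Hamilton, and extracting the eigenvector.
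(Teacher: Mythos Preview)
Your proof is correct and is precisely the standard textbook argument. The paper, however, does not prove this theorem at all: its ``proof'' is a one-line citation to Bay's \emph{Fundamentals of Linear State Space Systems}, so there is no in-paper argument to compare against. What you have written is essentially the proof one finds in such references, so in spirit you are supplying what the paper defers.

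Your remark about real versus complex eigenvectors is a genuine and worthwhile observation: the paper states $\nu\in\real^n$, but as you implicitly note, over $\real$ the ``only if'' direction can fail (e.g.\ $A$ a $2\times 2$ rotation with $C=\fatzero$ has no real eigenvector yet is unobservable). The statement is correct only if eigenvectors are taken in $\mathbb{C}^n$, and your suggestion to work over $\mathbb{C}$ throughout---with rank conditions being field-independent---is the right fix.
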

\begin{proof}
The proof of Theorem \ref{thm:right_eigen} is detailed in \cite[pg. 319]{bay1999fundamentals}.
\end{proof}
A useful restatement of the above theorem is the subsequent corollary:
\begin{cor}
\label{cor:disjoint_eigenspaces}The matrix pair $(A,C)$ is observable
if none of the eigenvectors of $A$ belong to the kernel of $C$, $null(C)$,
or equivalently if $null(sI_{n}-A)\cap null(C)=\emptyset$ for every
$s\in\real$.
\end{cor}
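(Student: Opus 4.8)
The plan is to read this corollary off Theorem \ref{thm:right_eigen} by contraposition, with only a cosmetic translation of notation. Theorem \ref{thm:right_eigen} states that $(A,C)$ is \emph{not} observable if and only if there is an eigenvector $\nu\in\real^{n}$ of $A$ with $C\nu=\fatzero_{m}$, i.e.\ with $\nu\in null(C)$. Negating the implication ``(there exists such an eigenvector) $\Rightarrow$ (not observable)'' yields exactly the first assertion of the corollary: if no eigenvector of $A$ lies in $null(C)$, then $(A,C)$ is observable. (Contraposing the full equivalence even upgrades the ``if'' to ``if and only if''; the statement as written is the weaker half and follows a fortiori.)

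For the second, equivalent formulation I would simply recast ``eigenvector of $A$ in $null(C)$'' in terms of nullspaces. A vector $\nu$ is an eigenvector of $A$ precisely when $\nu\neq\fatzero_{n}$ and $\nu\in null(sI_{n}-A)$ for $s$ equal to the corresponding eigenvalue; conversely every nonzero element of $null(sI_{n}-A)$ is such an eigenvector. Hence the set of eigenvectors of $A$ equals $\bigcup_{s\in\real}\big(null(sI_{n}-A)\setminus\{\fatzero_{n}\}\big)$, and ``no eigenvector of $A$ belongs to $null(C)$'' is literally the statement that, for every $s\in\real$, $null(sI_{n}-A)\cap null(C)$ contains no nonzero vector. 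This is what the displayed intersection condition encodes.

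The only point needing care --- and the nearest thing to an obstacle, though a purely notational one --- is that $\fatzero_{n}$ lies in both $null(sI_{n}-A)$ and $null(C)$, so these subspaces never have empty set-theoretic intersection; the condition ``$null(sI_{n}-A)\cap null(C)=\emptyset$'' must therefore be interpreted as ``the intersection is the trivial subspace $\{\fatzero_{n}\}$'', equivalently ``the intersection contains no eigenvector of $A$''. I would flag this reading explicitly (or phrase the conclusion via the absence of a common nonzero vector). Everything else, in particular the restriction to real scalars $s$ and real eigenvectors, is inherited verbatim from Theorem \ref{thm:right_eigen}, so no additional argument is required.
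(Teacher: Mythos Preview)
Your argument is correct and follows essentially the same route as the paper: contrapose Theorem~\ref{thm:right_eigen} and then rewrite ``eigenvector of $A$'' as ``nonzero element of $null(sI_{n}-A)$''. Your explicit remark that the condition $null(sI_{n}-A)\cap null(C)=\emptyset$ must be read as ``trivial intersection'' is a clarification the paper leaves implicit, but it does not change the approach.
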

\begin{proof}
Theorem \ref{thm:right_eigen} can be rephrased as the following logical
implication: the pair $(A,C)$ is observable if there is no eigenvector,
$\nu$, of $A$ for which $C\nu=\fatzero_{m}$. Hence, the pair $(A,C)$
is observable if there is no eigenvector of $A$ that belongs to the
kernel of $C$. In other words, the eigenspace of $A$ and the kernel
of $C$ are disjoint. Let $\lambda$ be an eigenvalue of $A$; the
eigenspace of $A$ can can be conveniently expressed as the kernel
of the matrix $\lambda I_{n}-A$; therefore, observability of $(A,C)$
requires $null(sI_{n}-A)\cap null(C)=\emptyset$ for every $s\in\real$.
A very similar derivation of this proof can be found in \cite{hespanha2018linear}. 
\end{proof}
A first consequence of the above corollary is the following:
\begin{prop}
\label{prop:common-kernels-unobservable}The matrix pair $(A,C)$
is unobservable if an eigenvector belongs to both kernels of $A$
and $B$, or, alternatively, if the two matrices have intersecting
kernels, i.e., $null(A)\cap null(C)\neq\emptyset$ .
\end{prop}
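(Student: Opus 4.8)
The plan is to reduce the statement directly to Theorem~\ref{thm:right_eigen} (equivalently, Corollary~\ref{cor:disjoint_eigenspaces}) by exhibiting the common kernel vector as an eigenvector of $A$ that is annihilated by $C$. First I would fix a nonzero vector $\nu$ lying in $null(A)\cap null(C)$ --- this is the content of the hypothesis, once we read the intersection as a \emph{nontrivial} one, since the zero vector trivially belongs to every kernel and is excluded from the definition of an eigenvector. The key observation is then immediate: $A\nu=\fatzero_{n}=0\cdot\nu$, so $\nu$ is an eigenvector of $A$ associated with the eigenvalue $\lambda=0$.

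Next I would invoke the membership $\nu\in null(C)$, i.e. $C\nu=\fatzero_{m}$. At this point the hypothesis of Theorem~\ref{thm:right_eigen} is satisfied: there exists an eigenvector $\nu$ of $A$ with $C\nu=\fatzero_{m}$. By that theorem, the pair $(A,C)$ is not observable, which is exactly the claim. Alternatively, one can phrase the same argument through Corollary~\ref{cor:disjoint_eigenspaces}: taking $s=0$ gives $null(0\cdot I_{n}-A)=null(A)$, so $null(sI_{n}-A)\cap null(C)\supseteq\{\nu\}\neq\emptyset$ for $s=0$, and the contrapositive of the corollary yields unobservability.

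I do not expect any genuine obstacle here, since the statement is an immediate specialization of the preceding results to the eigenvalue $0$; the only point requiring a word of care is the notational convention that ``$null(A)\cap null(C)\neq\emptyset$'' must be understood as the existence of a \emph{nonzero} common vector (the intersection, as a subspace, always contains $\fatzero_{n}$), and the reminder that $\lambda=0$ is a perfectly admissible eigenvalue whenever $A$ is singular. I would include one sentence making this convention explicit so the logical step ``$\nu\in null(A)\Rightarrow\nu$ is an eigenvector of $A$'' is unambiguous.
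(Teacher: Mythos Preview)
Your proposal is correct and follows essentially the same route as the paper: pick a nonzero $v\in null(A)\cap null(C)$, note that $Av=\fatzero_{n}=0\cdot v$ makes $v$ an eigenvector of $A$ for eigenvalue $0$, and then invoke Theorem~\ref{thm:right_eigen} since $Cv=\fatzero_{m}$. Your added remark clarifying that the intersection must be nontrivial (to exclude the zero vector) is a worthwhile refinement the paper leaves implicit.
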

\begin{proof}
This proposition can be viewed as an instance of Corollary \ref{cor:disjoint_eigenspaces}
when zero is an eigenvalue of matrix $A$. Suppose that $v$ is a
vector belonging to the intersection of matrices $A$ and $C$ kernels.
We start with the observation that vector $v$ also belongs to the
eigenspace of $A$ with corresponding eigenvalue zero, because $Av=\fatzero_{n}=0v$.
Since $v\in null(C)$, we have $Cv=\fatzero_{m}$, therefore the pair
$(A,C)$ is unobservable on the account of Theorem \ref{thm:right_eigen}
because an eigenvector of $A$ belongs to the kernel of $C$.
\end{proof}
Another consequence of Theorem \ref{thm:right_eigen} is the following
proposition:
\begin{prop}
\label{prop:C-rank-observable}The matrix pair $(A,C)$ is observable
if the rank of the observation matrix $C$ is $n$.
\end{prop}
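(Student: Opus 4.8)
The plan is to reduce the statement to Theorem \ref{thm:right_eigen} (equivalently Corollary \ref{cor:disjoint_eigenspaces}) via the rank-nullity theorem. First I would observe that $C\in\real^{m\times n}$ with $\text{rank}(C)=n$ forces, by rank-nullity, $\text{nullity}(C)=n-\text{rank}(C)=0$, so that $null(C)=\{\fatzero_{n}\}$. Since every eigenvector of $A$ is by convention nonzero, no eigenvector of $A$ can lie in $null(C)$: there is no $\nu\neq\fatzero_{n}$ with $A\nu=\lambda\nu$ for some scalar $\lambda$ and $C\nu=\fatzero_{m}$. Theorem \ref{thm:right_eigen} then immediately gives that $(A,C)$ is observable.

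An even shorter route, which I would mention as an alternative, argues directly from the observability matrix $\mathcal{O}$ in (\ref{eq:observability_matrix}): its leading $m\times n$ block is $C$ itself, so $\text{rank}(\mathcal{O})\geq\text{rank}(C)=n$; as $\mathcal{O}$ has exactly $n$ columns, $\text{rank}(\mathcal{O})=n$, and Theorem \ref{thm:observability-matrix} yields observability.

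I do not expect a genuine obstacle here; the only point requiring any care is the (essentially definitional) remark that eigenvectors are nonzero, which is exactly what prevents the zero vector from appearing as a spurious common element of the two kernels. It may also be worth noting explicitly that the implication is one-directional---full column rank of $C$ is sufficient but by no means necessary for observability---so that the proposition is not misread as a characterization.
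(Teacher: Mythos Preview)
Your proof is correct and matches the paper's own argument essentially line for line: rank-nullity gives $\text{nullity}(C)=0$, hence no (nonzero) eigenvector of $A$ lies in $null(C)$, and Theorem~\ref{thm:right_eigen} concludes. The alternative route via $\text{rank}(\mathcal{O})\geq\text{rank}(C)$ and Theorem~\ref{thm:observability-matrix} is a valid extra observation not present in the paper, and your closing remark about sufficiency versus necessity is accurate.
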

\begin{proof}
If the matrix $C$ is of rank $n$, then by virtue of the rank-nullity
theorem the nullity of $C$ is zero; hence, there does not exist a
vector $v\in\real^{n}$ such that $Cv=0$. Thereby, in light of Theorem
\ref{thm:right_eigen}, the LTI system (\ref{eq:LTI-1}) is observable.
\end{proof}
The Popov-Belevitch-Hautus test, commonly known as the PBH or Hautus
test, for observability is an elegant restatement of the eigenvector
test of Theorem \ref{thm:right_eigen} and the rank conditions of
Corollary \ref{cor:disjoint_eigenspaces} and Proposition \ref{prop:C-rank-observable}. 
\begin{thm}
(Popov-Belevitch-Hautus test for observability) The LTI system given
in (\ref{eq:LTI-1}) is observable if and only if
\[
rank(\left[\begin{array}{c}
sI_{n}-A\\
\hline C
\end{array}\right])=n,\text{ for every }s\in\real.
\]
\end{thm}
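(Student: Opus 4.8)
The plan is to reduce the rank condition to the eigenvector test of Theorem~\ref{thm:right_eigen} and its Corollary~\ref{cor:disjoint_eigenspaces}, avoiding any direct manipulation of the observability matrix $\mathcal{O}$. The pivotal observation is that the $(n+m)\times n$ block matrix
\[
M(s)=\left[\begin{array}{c} sI_{n}-A\\\hline C\end{array}\right]
\]
has rank strictly less than $n$ for a given $s$ if and only if its kernel is nontrivial; that is, if and only if there exists a nonzero $v\in\real^{n}$ with $(sI_{n}-A)v=\fatzero_{n}$ and $Cv=\fatzero_{m}$ simultaneously, which is exactly a nonzero element of $null(sI_{n}-A)\cap null(C)$. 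The passage between ``$M(s)$ has full column rank $n$'' and ``$null(M(s))=\{\fatzero_{n}\}$'' is the rank-nullity theorem applied to the $n$-column matrix $M(s)$, the same device already used in the proof of Proposition~\ref{prop:C-rank-observable}.

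First I would dispose of the values of $s$ that are not eigenvalues of $A$: for those, $sI_{n}-A$ is invertible, so $null(sI_{n}-A)=\{\fatzero_{n}\}$, which forces $null(M(s))=\{\fatzero_{n}\}$ and hence $rank(M(s))=n$ regardless of $C$. Thus the rank of $M(s)$ can drop only at the (real) eigenvalues of $A$, and at such an $s=\lambda$ it drops precisely when the eigenspace $null(\lambda I_{n}-A)$ contains a nonzero vector of $null(C)$ --- that is, precisely when $A$ has an eigenvector $\nu$ with $C\nu=\fatzero_{m}$.

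The two implications now follow at once. If $rank(M(s))=n$ for every $s\in\real$, the previous step shows that no eigenvector of $A$ lies in $null(C)$, so $null(sI_{n}-A)\cap null(C)$ is trivial for all $s$, and Corollary~\ref{cor:disjoint_eigenspaces} yields observability of $(A,C)$. Conversely, if $(A,C)$ is observable, Theorem~\ref{thm:right_eigen} forbids any eigenvector of $A$ from lying in $null(C)$, so $M(\lambda)$ has trivial kernel at every eigenvalue $\lambda$ of $A$; combined with the non-eigenvalue case, this gives $rank(M(s))=n$ for all $s\in\real$.

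I do not expect a genuine obstacle: the statement is essentially a repackaging of Theorem~\ref{thm:right_eigen}, and the only step needing a moment's care is the bookkeeping passage between trivial kernel and full column rank, handled by the rank-nullity theorem as noted above. As with Theorem~\ref{thm:right_eigen}, the argument is carried out under the paper's convention that the spectral parameter $s$ ranges over $\real$.
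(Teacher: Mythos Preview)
Your argument is correct: the reduction to Theorem~\ref{thm:right_eigen} via the rank--nullity theorem is the standard route, and your handling of the non-eigenvalue case plus the two implications is clean and complete. The paper itself does not give an independent proof here; it simply cites the textbook derivations in \cite{bay1999fundamentals} and \cite{hespanha2018linear}, which proceed along essentially the same lines you outline (showing that the PBH rank condition fails at some $s$ precisely when an eigenvector of $A$ lies in $null(C)$). So your proposal is not merely consistent with the paper's approach---it actually supplies the argument the paper defers to the references, and it does so using only the tools already assembled in Theorem~\ref{thm:right_eigen}, Corollary~\ref{cor:disjoint_eigenspaces}, and Proposition~\ref{prop:C-rank-observable}.
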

\begin{proof}
The proof of the PBH test for observability can be found in \cite[pg.321]{bay1999fundamentals}
or \cite[pg.145]{hespanha2018linear}.
\end{proof}
In summation, the PBH test establishes the equivalence that the pair
$(A,C)$ is observable if and only if $null(sI_{n}-A)\cap null(C)=\emptyset$
for every $s\in\real$.

\section{Observability of Special RSS Observation Matrices\label{sec:SpecialMatrices}}
In this section, we investigate the observability property of the
generic LTI system given in (\ref{eq:LTI-1}) when the observation
matrix $C$ corresponds to an RSS with distinctive topology. Our aim
is to gradually bring out key notions that are needed in the subsequent
analysis of more general cases. Consider an RSS consisting of a collection
of ``circular'' relative measurements with an observation matrix
that corresponds to a cycle relative observations graph $\mathcal{G}$
having $n$ vertices and edges, such that $y_{i}=x_{i}-x_{i+1}$,
with $i=1,\ldots,n-1$, and $y_{n}=x_{n}-x_{1}$.
\begin{defn}
The \emph{$n\times n$ cycle observation matrix} $C_{n}^{c}$ is defined
as the difference $C_{n}^{c}=I_{n}-R_{n}$, where $R_{n}$ is the
$n\times n$ \emph{full cycle shift matrix} given by
\[
R_{n}=\begin{bmatrix}0 & 1 & 0 & \cdots & 0\\
0 & 0 & 1 & \cdots & 0\\
\vdots & \vdots & \vdots & \cdots & 0\\
0 & 0 & 0 & \cdots & 1\\
1 & 0 & 0 & \cdots & 0
\end{bmatrix}.
\]

At this point, we narrow our attention to the cycle observation matrix,
$C_{n}^{c}$, because it celebrates a number of featured properties
such as connectivity, presence of a cycle, and symmetry. The following
proposition specifies the rank of the $n\times n$ cycle observation
matrix $C_{n}^{c}$:
\end{defn}
\begin{prop}
\label{prop:Cs} The rank of the $n\times n$ cycle observation matrix
$C_{n}^{c}$ is $n-1$. 
\end{prop}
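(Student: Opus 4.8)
The plan is to compute $\operatorname{rank}(C_n^c) = n - \operatorname{nullity}(C_n^c)$ by pinning down the kernel of $C_n^c = I_n - R_n$ exactly. Since $C_n^c x = 0$ iff $R_n x = x$, the nullspace of $C_n^c$ is precisely the eigenspace of the cyclic shift matrix $R_n$ associated with eigenvalue $1$. So the proposition reduces to the claim that this eigenspace is one-dimensional, i.e. that $1$ is a simple eigenvalue of $R_n$.

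**Key steps.** First I would write out the equation $R_n x = x$ componentwise: reading off the rows of $R_n$ gives $x_2 = x_1$, $x_3 = x_2$, \ldots, $x_n = x_{n-1}$, and from the last (wrap-around) row $x_1 = x_n$. These force $x_1 = x_2 = \cdots = x_n$, so $x$ is a scalar multiple of $\fatone_n$. Hence $\operatorname{null}(C_n^c) = \operatorname{span}(\fatone_n)$, which has dimension $1$; equivalently, one checks directly that $\fatone_n$ lies in the kernel (each row of $C_n^c$ has one $+1$ and one $-1$, so row sums vanish) and that no independent vector does. Then the rank--nullity theorem (invoked in the matrix-algebra preliminaries) yields $\operatorname{rank}(C_n^c) = n - 1$. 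An alternative, more algebraic route is to note that $R_n$ is the companion/permutation matrix of the full $n$-cycle, with characteristic polynomial $\lambda^n - 1$, whose roots are the $n$ distinct $n$-th roots of unity; since these are distinct, each eigenvalue—in particular $\lambda = 1$—is simple, giving $\operatorname{nullity}(I_n - R_n) = 1$. I would present the elementary componentwise argument as the main proof since it needs no spectral machinery, and perhaps remark on the roots-of-unity viewpoint.

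**Main obstacle.** There is no serious obstacle here; the only point requiring a little care is making the wrap-around row of $R_n$ explicit so that the chain of equalities genuinely closes up into $x_1 = x_2 = \cdots = x_n$ rather than leaving a free parameter. One should also state clearly why $\operatorname{span}(\fatone_n)$ is exactly the kernel (both inclusions), not merely contained in it, before applying rank--nullity. Everything else is routine.
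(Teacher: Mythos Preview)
Your proposal is correct and follows essentially the same route as the paper: both arguments write out $C_n^c\nu=\fatzero_n$ componentwise to obtain $\nu_1=\nu_2=\cdots=\nu_n$, conclude that $\operatorname{null}(C_n^c)=\operatorname{span}(\fatone_n)$, and then invoke the rank--nullity theorem. The roots-of-unity remark is a nice aside but is not needed (and the paper does not include it).
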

\begin{proof}
Assume that the vector $\nu=\left[\nu_{1};\nu_{2};\cdots;\nu_{n}\right]$
belongs to the nullspace of matrix $C_{n}^{c}$, i.e., $\nu\in null(C_{n}^{c})$;
then, for the vector $\nu$ we have 
\[
C_{n}^{c}\nu=\fatzero_{n}\Rightarrow\begin{bmatrix}\nu_{1}-\nu_{2}\\
\nu_{2}-\nu_{3}\\
\vdots\\
\nu_{n}-\nu_{1}
\end{bmatrix}=\fatzero_{n}\Rightarrow\nu_{1}=\nu_{2}=\cdots=\nu_{n}.
\]
From the above right-hand side equalities, it holds that $C_{n}^{n}\fatone_{n}=\fatzero_{n}$.
Therefore, the nullspace of $C_{s}$ is the spanning set of $\fatone_{n}$,
i.e., $null\left(C_{n}^{c}\right)=span\left(\fatone_{n}\right)$,
and its nullity is one ($nullity(C_{n}^{c})=1$). From the rank-nullity
Theorem, we get 
\[
rank\left(C_{n}^{c}\right)+nullity\left(C_{n}^{c}\right)=n\Rightarrow rank\left(C_{n}^{c}\right)=n-1.
\]
\end{proof}
Intuitively, having available only relative measurements of the states
may limit the observability of the LTI system. This claim is somewhat
expected because Proposition \ref{prop:Cs} has established that the cycle
observation matrix, $C_{n}^{c}$, is rank deficient.  

For example, in the trivial case where $A=0$ (absence of dynamics)
and $u=0$, such that $\dot{x}=0$, it is not possible to obtain an
estimate of $x$ from the observation vector $y=C_{n}^{c}x$, where
absolute measurements are entirely missing. This perceptible notion
is formally asserted in the following Theorem:
\begin{thm}
\label{thm:Cs_observability} The matrix pair ($A,C_{n}^{c}$) is
observable if and only if no eigenvector of matrix $A$ belongs to
the span of $\fatone_{n}$. 
\end{thm}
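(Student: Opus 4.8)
The plan is to obtain the statement as an immediate consequence of the eigenvector test (Theorem \ref{thm:right_eigen}) combined with the explicit description of $null(C_n^c)$ already worked out in the proof of Proposition \ref{prop:Cs}. The one fact that does all the work is the identity $null(C_n^c)=span(\fatone_n)$.

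First I would recall from the proof of Proposition \ref{prop:Cs} that $C_n^c\nu=\fatzero_n$ forces $\nu_1=\nu_2=\cdots=\nu_n$, so that $\nu$ annihilates the cycle observation matrix exactly when $\nu$ is a scalar multiple of $\fatone_n$; hence the kernel of $C_n^c$ is precisely the one–dimensional line $span(\fatone_n)$. Then I would apply Theorem \ref{thm:right_eigen} with $C=C_n^c$: the pair $(A,C_n^c)$ is \emph{not} observable if and only if there is an eigenvector $\nu$ of $A$ with $C_n^c\nu=\fatzero_n$, i.e. with $\nu\in null(C_n^c)=span(\fatone_n)$. Contraposing this logical equivalence yields: $(A,C_n^c)$ is observable if and only if no eigenvector of $A$ lies in $span(\fatone_n)$, which is exactly the asserted characterization.

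I do not anticipate a genuine obstacle; the only point requiring a little care is to use the ``if and only if'' form of Theorem \ref{thm:right_eigen} (rather than the one–directional Corollary \ref{cor:disjoint_eigenspaces} or Proposition \ref{prop:C-rank-observable}), since it is precisely the biconditional that upgrades the computation of $null(C_n^c)$ into a necessary \emph{and} sufficient condition. As a closing remark I would observe that, because $span(\fatone_n)$ is one–dimensional, the condition ``no eigenvector of $A$ belongs to $span(\fatone_n)$'' is equivalent to ``$\fatone_n$ is not an eigenvector of $A$'', i.e. $A\fatone_n\notin span(\fatone_n)$; equivalently, the row sums of $A$ are not all equal. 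This reformulation will be the form that is most convenient to invoke in the later sections dealing with the consensus dynamics.
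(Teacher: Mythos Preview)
Your proposal is correct and follows essentially the same route as the paper: identify $null(C_n^c)=span(\fatone_n)$ from Proposition~\ref{prop:Cs} and combine it with the eigenvector criterion. The only cosmetic difference is that the paper phrases the criterion via the PBH test (disjointness of $null(sI_n-A)$ and $null(C_n^c)$) rather than invoking Theorem~\ref{thm:right_eigen} directly, but these are equivalent formulations and the substance of the argument is identical; your closing remark that the condition amounts to $A\fatone_n\neq s\fatone_n$ for any $s\in\real$ also appears verbatim in the paper's proof.
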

\begin{proof}
The PBH test requires for the LTI system (\ref{eq:LTI-1}) to be observable
that the following rank condition will hold: 
\[
rank(\left[\begin{array}{c}
sI_{n}-A\\
\hline C_{n}^{c}
\end{array}\right])=n,\text{ for every }s\in\real.
\]

Because the compound matrix of the PBH test has $n$ columns, the
requirement for observability is that its nullspace is empty. Another
way to view this requirement is that the nullspace of the matrix $\left[sI-A\right]$
with the nullspace of $C_{n}^{c}$ are disjoint, i.e., $null(sI-A)\cap null(C_{s})=\emptyset$,
for every $s\in\real$, or that there can be no eigenvector of $A$
in the nullspace of $C_{n}^{c}$. 

In Theorem \ref{prop:Cs}, we have shown that the one-dimensional
nullspace of $C_{n}^{c}$ is the span of the vector $\fatone_{n}$.
Thus, the LTI system is observable if and only if no eigenvector of
$A$ belongs to $span\left(\fatone_{n}\right)$; that is, $A\fatone_{n}\neq s\fatone_{n}$
for any $s\in\real$.
\end{proof}
As an example, consider the cycle observation matrix $C_{3}^{c}$
that corresponds to an RSS with three nodes and has the form
\[
C_{3}^{c}=\left[\begin{array}{ccc}
1 & -1 & 0\\
0 & 1 & -1\\
-1 & 0 & 1
\end{array}\right].
\]
The eigenvectors of the diagonal state matrix $A=diag(1,1,2)$ are
$v_{1}=[1;0;0]$, $v_{2}=[0;1;0]$, and $v_{3}=[0;0;1]$; where none
of which belongs to the span of $\fatone_{3}$. As expected, the rank
of the observability matrix is three; thus, the LTI system is observable.
Now, assume that the state matrix, $A$, is the identity matrix $I_{3}$.
It is well-known that $I_{3}$ has a single eigenvalue $\lambda=1$,
and its corresponding eigenspace is the entire $\real^{3}$. Because
$\fatone_{3}$ belongs to the eigenspace of $A$, the system is unobservable
and its observability matrix is rank defficient (i.e., $rank\left(\mathcal{O}\right)=2$).
Returning to the motivating example of Theorem \ref{thm:Cs_observability},
the state matrix $A=O_{3}$ has one eigenvalue $\lambda=0$, and its
corresponding eigenspace is the entire $\real^{3}$; hence, the system
that corresponds to the matrix pair $\left(O_{3},C_{3}^{g}\right)$
is unobservable and, similarly to the previous case, $rank\left(\mathcal{O}\right)=2$.

A mean to guarantee the observability of an LTI system that is monitored
by an RSS with observation matrix $C_{n}^{c}$ is to add an anchor
node to the observation vector, e.g., without loss of generality,
the absolute measurement of the state $x_{n}$. Therefore, we define
the $\left(n+1\right)\times n$ \emph{anchored cycle observation matrix}
$C_{n}^{a}$ as follows:
\[
C_{n}^{a}=\left[\begin{array}{c}
C_{n}^{a}\\
\hline e_{n}
\end{array}\right]\text{, where }e_{n}=\left[0,0,\ldots0,1\right].
\]
\begin{prop}
\label{proposition:Ca_rank}The rank of the $\left(n+1\right)\times n$
anchored cycle observation matrix $C_{n}^{a}$ is $n$. 
\end{prop}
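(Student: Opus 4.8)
The plan is to compute $\mathrm{null}(C_n^a)$ directly and then invoke the rank--nullity theorem. Observe first that $C_n^a$ is obtained by appending the single row $e_n = [0,\ldots,0,1]$ beneath the block $C_n^c$. Consequently, a vector $\nu \in \real^n$ belongs to $\mathrm{null}(C_n^a)$ if and only if it satisfies \emph{both} $C_n^c \nu = \fatzero_n$ and $e_n \nu = 0$; in other words $\mathrm{null}(C_n^a) = \mathrm{null}(C_n^c) \cap \mathrm{null}(e_n)$. This reduction is the conceptual heart of the argument and mirrors the kernel-intersection viewpoint used throughout the preceding propositions.

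Next I would bring in Proposition \ref{prop:Cs} (more precisely, the computation in its proof), which shows that $\mathrm{null}(C_n^c) = \mathrm{span}(\fatone_n)$ and is one-dimensional. Hence every $\nu \in \mathrm{null}(C_n^a)$ must be of the form $\nu = c\,\fatone_n$ for some scalar $c \in \real$. Imposing the anchor constraint $e_n \nu = 0$ then yields $e_n (c\,\fatone_n) = c = 0$, since $e_n$ simply extracts the last (unit) component of $\fatone_n$. Therefore $\nu = \fatzero_n$, so $\mathrm{null}(C_n^a) = \{\fatzero_n\}$ and $\mathrm{nullity}(C_n^a) = 0$.

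Finally, applying the rank--nullity theorem to $C_n^a$, which has $n$ columns, gives
\[
\mathrm{rank}(C_n^a) + \mathrm{nullity}(C_n^a) = n \;\Rightarrow\; \mathrm{rank}(C_n^a) = n,
\]
so $C_n^a$ has full column rank, as claimed. I do not anticipate any genuine obstacle in this argument: the only point requiring a moment's care is the clean identification $\mathrm{null}(C_n^a) = \mathrm{null}(C_n^c)\cap\mathrm{null}(e_n)$ together with the observation that the anchor row is nonzero precisely on the coordinate along which $\fatone_n$ is nonzero, which is exactly what kills any nontrivial kernel vector. (One could equivalently phrase the conclusion in the language of Proposition \ref{prop:C-rank-observable}, noting that full column rank of the observation matrix suffices for observability regardless of $A$.)
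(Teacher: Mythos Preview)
Your proof is correct and follows essentially the same approach as the paper: both compute $\mathrm{null}(C_n^a)$ directly (the paper by writing out the equations $\nu_1=\nu_2=\cdots=\nu_n=0$ explicitly, you by invoking the already-established $\mathrm{null}(C_n^c)=\mathrm{span}(\fatone_n)$ and intersecting with $\mathrm{null}(e_n)$) and then apply the rank--nullity theorem. The only difference is cosmetic---you cite Proposition~\ref{prop:Cs} rather than redoing its computation inline.
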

\begin{proof}
It suffices to show that $nullity\left(C_{n}^{a}\right)$=0 or $null\left(C_{n}^{a}\right)=\emptyset$.
Assume that the vector $\nu=\left[\nu_{1};\nu;\ldots;\nu_{n}\right]$
belongs to the nullspace of matrix $C_{n}^{a}$, i.e., $\nu\in null(C_{g})$.
Then, for vector $\nu$ we have the following series of implications: 
\begin{align*}
C_{n}^{a}\cdot\nu=\fatzero_{\left(n+1\right)} & \Rightarrow\begin{bmatrix}\nu_{1}-\nu_{2}\\
\nu_{2}-\nu_{3}\\
\vdots\\
\nu_{n}-\nu_{1}\\
\nu_{n}
\end{bmatrix}=\fatzero_{\left(n+1\right)}\\
 & \Rightarrow\nu_{1}=\nu_{2}=\cdots=\nu_{n}=0.
\end{align*}

Therefore, only the trivial eigenvector, $\fatzero_{n}$, belongs
to the nullspace of matrix $C_{n}^{a}$, establishing that $nullity\left(C_{n}^{a}\right)=0$.
From the rank-nullity theorem, we get
\[
rank\left(C_{n}^{a}\right)+nullity\left(C_{n}^{a}\right)=n\Rightarrow rank\left(C_{n}^{a}\right)=n
\]
\end{proof}
In light of Proposition \ref{prop:C-rank-observable}, Proposition
\ref{proposition:Ca_rank} constitutes every LTI system with observation
matrix $C_{n}^{a}$ observable. In summary, this section examines
the observability property of LTI systems that are monitored by an
RSS with an observation matrix that corresponds to a cycle graph with
and without an anchor node. 

\section{Observability of General RSSs\label{sec:GeneralRSS}}
In this section, our attention is shifted to the overall instance
of LTI systems monitored by an RSS, which does not necessarily have
a distinguishing topology. Consider the standard description of LTI
systems given in (\ref{eq:LTI-1}). The RSS is mathematically encapsulated
by the observation vector $y=\mathcal{D}^{T}x$, corresponding to
an observation graph $\mathcal{G}$. 

In what follows, unless stated otherwise, we consider the graph $\mathcal{G}$
be connected with $n$ vertices and $\epsilon$ edges. It is silently
implied that the incidence matrix of $\mathcal{G}$ is $\mathcal{D}$.
The orientation of the graph is assumed arbitrary, as it is immaterial
to the derivation of the subsequent propositions. In the following
analysis, we will use frequently the premises of the next two lemmas.
\begin{lem}
\label{lem:rank-Dt}The transpose of its incidence matrix, $\mathcal{D}^{T}$,
has rank $n-1$.
\end{lem}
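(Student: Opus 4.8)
The plan is to determine $\text{null}(\mathcal{D}^{T})$ explicitly and then apply the rank–nullity theorem. First I would dispose of the easy inequality $\text{rank}(\mathcal{D}^{T})\le n-1$. By construction, every column of $\mathcal{D}$ (indexed by an edge) has exactly one entry equal to $+1$ at the head vertex, one entry equal to $-1$ at the tail vertex, and zeros elsewhere; hence every column of $\mathcal{D}$ sums to zero, which is precisely the identity $\fatone_{n}^{T}\mathcal{D}=\fatzero$, equivalently $\mathcal{D}^{T}\fatone_{n}=\fatzero$. This already gives $\fatone_{n}\in\text{null}(\mathcal{D}^{T})$, so $\text{nullity}(\mathcal{D}^{T})\ge 1$.

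Next I would prove the reverse inclusion $\text{null}(\mathcal{D}^{T})\subseteq\text{span}(\fatone_{n})$, and this is where connectivity of $\mathcal{G}$ is used. Take an arbitrary $v=[v_{1};v_{2};\ldots;v_{n}]\in\text{null}(\mathcal{D}^{T})$. The row of $\mathcal{D}^{T}$ indexed by an edge $(p,q)\in\mathcal{E}$ has a $-1$ in column $p$, a $+1$ in column $q$, and zeros elsewhere, so the corresponding component of $\mathcal{D}^{T}v$ equals $v_{q}-v_{p}$. Therefore $\mathcal{D}^{T}v=\fatzero$ forces $v_{p}=v_{q}$ for every edge $(p,q)$. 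Since $\mathcal{G}$ is connected, any two vertices $a$ and $b$ are joined by a path, and propagating the equality edge by edge along that path yields $v_{a}=v_{b}$; consequently all entries of $v$ are equal and $v\in\text{span}(\fatone_{n})$.

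Combining the two inclusions gives $\text{null}(\mathcal{D}^{T})=\text{span}(\fatone_{n})$, hence $\text{nullity}(\mathcal{D}^{T})=1$. Because $\mathcal{D}^{T}$ has $n$ columns, the rank–nullity theorem yields $\text{rank}(\mathcal{D}^{T})=n-\text{nullity}(\mathcal{D}^{T})=n-1$, as claimed. (An alternative route exploits $\mathcal{D}\mathcal{D}^{T}=\mathcal{L}$ and the fact, recalled in Section \ref{sec:MathPrelim}, that the Laplacian of a connected graph has a one-dimensional nullspace spanned by $\fatone_{n}$, together with $\text{rank}(\mathcal{D}\mathcal{D}^{T})=\text{rank}(\mathcal{D}^{T})$; but the direct argument above is self-contained.) The only genuine obstacle is the connectivity step: one must state the path-propagation argument carefully so that it is clear the equalities forced on adjacent entries extend to all pairs of vertices; everything else is routine bookkeeping with the definition of the incidence matrix.
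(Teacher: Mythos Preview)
Your argument is correct and complete. The paper does not actually prove this lemma; it simply cites standard references (Biggs, and Godsil--Royle). Your direct computation of $\text{null}(\mathcal{D}^{T})$ via the edge-by-edge equalities plus connectivity, followed by rank--nullity, is precisely the standard textbook argument those citations point to, and in fact it is essentially the same reasoning the paper itself spells out in the very next result (Lemma~\ref{lem:kernel-of-Dt}). So there is no substantive difference in approach---you have merely filled in what the paper outsourced.
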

\begin{proof}
The proof of this lemma can be found in \cite[pg.24]{biggs1993algebraic}
or \cite[pg.166]{godsil2013algebraic}.
\end{proof}
\begin{lem}
\label{lem:kernel-of-Dt}The kernel of the transpose of the incidence
matrix, $\mathcal{D}^{T}$, is the span of $\fatone_{n}$, that is,
$null(\mathcal{D}^{T})=span(\fatone_{n})$ and in consequence $nullity(\mathcal{D}^{T})=1$.
\end{lem}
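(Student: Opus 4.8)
The plan is to prove the two claims of Lemma~\ref{lem:kernel-of-Dt} in sequence, using Lemma~\ref{lem:rank-Dt} for the dimension count. First I would establish the inclusion $\text{span}(\fatone_{n})\subseteq null(\mathcal{D}^{T})$ by a direct computation. Recall that each column of $\mathcal{D}$ corresponds to an edge and has exactly one entry equal to $+1$ and one entry equal to $-1$ (all others zero), reflecting the head and tail of that edge; equivalently, each row of $\mathcal{D}^{T}$ has entries summing to zero. Hence $\mathcal{D}^{T}\fatone_{n}=\fatzero_{\epsilon}$, because the $j$-th component of $\mathcal{D}^{T}\fatone_{n}$ is just the sum of the entries in the $j$-th column of $\mathcal{D}$, which is $(+1)+(-1)=0$. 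By linearity, $\mathcal{D}^{T}(c\fatone_{n})=\fatzero_{\epsilon}$ for every $c\in\real$, so $\text{span}(\fatone_{n})\subseteq null(\mathcal{D}^{T})$.

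Next I would pin down the dimension. By Lemma~\ref{lem:rank-Dt}, $\text{rank}(\mathcal{D}^{T})=n-1$, and since $\mathcal{D}^{T}$ has $n$ columns, the rank-nullity theorem gives $\text{nullity}(\mathcal{D}^{T})=n-\text{rank}(\mathcal{D}^{T})=1$. On the other hand, $\fatone_{n}\neq\fatzero_{n}$, so $\text{span}(\fatone_{n})$ is a one-dimensional subspace. A one-dimensional subspace contained in another one-dimensional subspace must equal it; therefore $null(\mathcal{D}^{T})=\text{span}(\fatone_{n})$ and $\text{nullity}(\mathcal{D}^{T})=1$, as claimed.

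If one prefers a self-contained argument for why nothing outside $\text{span}(\fatone_{n})$ lies in the kernel — without invoking Lemma~\ref{lem:rank-Dt} — I would use connectedness directly: suppose $\mathcal{D}^{T}\nu=\fatzero_{\epsilon}$. The row of $\mathcal{D}^{T}$ associated with an edge $(i,j)$ reads $\pm(\nu_{i}-\nu_{j})=0$, forcing $\nu_{i}=\nu_{j}$ for every edge. Since $\mathcal{G}$ is connected, any two vertices are joined by a path, and propagating the equality along such a path yields $\nu_{1}=\nu_{2}=\cdots=\nu_{n}$, i.e., $\nu\in\text{span}(\fatone_{n})$. This also reproves $\text{nullity}(\mathcal{D}^{T})=1$ independently.

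The argument is essentially routine; there is no serious obstacle. The only point requiring a little care is making explicit the structure of the rows of $\mathcal{D}^{T}$ (one $+1$, one $-1$, rest zero) so that both the containment $\mathcal{D}^{T}\fatone_{n}=\fatzero_{\epsilon}$ and the edge-wise equalities $\nu_{i}=\nu_{j}$ are unambiguous; this is immediate from the definition of the incidence matrix given in Section~\ref{sec:MathPrelim}. I would present the short proof — the inclusion plus the rank-nullity dimension count citing Lemma~\ref{lem:rank-Dt} — as the main line, and optionally remark on the connectedness argument as an alternative.
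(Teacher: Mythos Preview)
Your proposal is correct. The paper's own proof is precisely your ``alternative'' connectedness argument: it assumes a vector $v\notin\text{span}(\fatone_{n})$ lies in $null(\mathcal{D}^{T})$, reads off $v_i=v_j$ for every edge $(i,j)\in\mathcal{E}$, and uses connectedness of $\mathcal{G}$ to force all components equal, a contradiction. Your primary line is a mildly different route: you establish the easy inclusion $\text{span}(\fatone_{n})\subseteq null(\mathcal{D}^{T})$ and then invoke Lemma~\ref{lem:rank-Dt} together with rank--nullity to match dimensions. This buys you a shorter argument that reuses the already-cited rank result rather than re-deriving it via connectedness, at the small cost of depending on Lemma~\ref{lem:rank-Dt}; the paper's version is self-contained but essentially reproves that lemma's content. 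Either presentation is fine here.
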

\begin{proof}
We prove this lemma by contradiction. Assume that vector $v=[v_{1};v_{2},\ldots;v_{n}]$
does not belong to the $span(\fatone_{n})$ but it belongs to the
kernel of $\mathcal{D}^{T}$: by assumption, we can write 
\[
\mathcal{D}^{T}v=\fatzero_{n}.
\]
Then, for every element pair $(i,j)$ of the edge set $\mathcal{E}$
of $\mathcal{G}$, we have $v_{i}-v_{j}=0$. Because $\mathcal{G}$
is connected, this implies that $v_{i}=v_{j}$ for all $i,j\in\mathcal{V}$.
Because all elements of $v$ are equal, it belongs to the span of
$\fatone_{n}$. It is trivial to show that $span(\fatone_{n})$ is
a single-dimension vector space, therefore, $nullity(\mathcal{D}^{T})=1$,
which concludes the proof.
\end{proof}
The observability of the LTI system (\ref{eq:LTI-1}) monitored by
an RSS with observation vector $y=\mathcal{D}^{T}x$ is established
by our next theorem. 
\begin{thm}
\label{thm:observability_LTI_Dt}The matrix pair ($A$, $\mathcal{D}^{T}$)
is observable if and only if no eigenvector of $A$ belongs to the
kernel of $\mathcal{D}^{T}$, namely, $span\left(\fatone_{n}\right)$. 
\end{thm}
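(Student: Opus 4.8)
The plan is to obtain the statement as an immediate specialization of the eigenvector test of Theorem~\ref{thm:right_eigen} (equivalently Corollary~\ref{cor:disjoint_eigenspaces} or the PBH test), combined with the explicit description of $null(\mathcal{D}^T)$ furnished by Lemma~\ref{lem:kernel-of-Dt}. The key observation is that here the observation matrix $C$ is precisely $\mathcal{D}^T$, so the abstract condition ``an eigenvector of $A$ lies in $null(C)$'' can be replaced by the concrete condition ``an eigenvector of $A$ lies in $span(\fatone_n)$.''

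Concretely, I would proceed in three short steps. First, instantiate Theorem~\ref{thm:right_eigen} with $C=\mathcal{D}^T$: the pair $(A,\mathcal{D}^T)$ is \emph{not} observable if and only if there exists an eigenvector $\nu\in\real^n$ of $A$ with $\mathcal{D}^T\nu=\fatzero_n$, i.e., with $\nu\in null(\mathcal{D}^T)$. Second, invoke Lemma~\ref{lem:kernel-of-Dt}, which applies because $\mathcal{G}$ is assumed connected, to substitute $null(\mathcal{D}^T)=span(\fatone_n)$; thus $(A,\mathcal{D}^T)$ is not observable if and only if $A$ has an eigenvector lying in $span(\fatone_n)$. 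Third, take the contrapositive to conclude that $(A,\mathcal{D}^T)$ is observable if and only if no eigenvector of $A$ belongs to $span(\fatone_n)$, which is the claim.

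There is no genuine obstacle here; the argument is a direct substitution. The only points requiring care are bookkeeping ones: one must note that the connectedness hypothesis on $\mathcal{G}$ is exactly what licenses the use of Lemma~\ref{lem:kernel-of-Dt} (for a disconnected graph the kernel of $\mathcal{D}^T$ is higher-dimensional and the characterization fails), and one should make sure the biconditional is carried through both directions rather than only one—this is automatic since Theorem~\ref{thm:right_eigen} is itself an ``if and only if.'' As a closing remark, I would point out that because $span(\fatone_n)$ is one-dimensional, the requirement ``no eigenvector of $A$ lies in $span(\fatone_n)$'' is equivalent to the statement that $\fatone_n$ is not an eigenvector of $A$, i.e., $A\fatone_n\neq s\fatone_n$ for every $s\in\real$; this mirrors the conclusion of Theorem~\ref{thm:Cs_observability} and will be the convenient form to use in the subsequent analysis of the agreement dynamics.
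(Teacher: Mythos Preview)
Your proposal is correct and matches the paper's own proof essentially line for line: the paper invokes the PBH/disjoint-kernels formulation (equivalent to Theorem~\ref{thm:right_eigen}) together with Lemma~\ref{lem:kernel-of-Dt} to identify $null(\mathcal{D}^{T})=span(\fatone_{n})$, and then draws the same conclusion, including the closing remark that the condition amounts to $A\fatone_{n}\neq s\fatone_{n}$ for every $s\in\real$.
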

\begin{proof}
The proof is derived in much the same way as Theorem \ref{thm:Cs_observability}.
The pair ($A$, $\mathcal{D}^{T}$) is observable if the kernels of
$sI_{n}-A$, for every $s\in\real$, and $\mathcal{D}^{T}$ are disjoint,
i.e., $null(sI_{n}-A)\cap null(\mathcal{D}^{T})=\emptyset$. Lemma
\ref{lem:kernel-of-Dt} asserts that the kernel of $\mathcal{D}^{T}$
is the span of $\fatone_{n}$. It follows that the matrix pair ($A$,
$\mathcal{D}^{T}$) is observable if there is no eigenvector of $A$
that belongs to $span(\fatone_{n})$, or there is no real $s$
such that $A\fatone_{n}=s\fatone_{n}$.
\end{proof}
Analogous to the case of the cycle observation matrix, $C_{n}^{c}$,
the following lemma manifests that the observability of an LTI system
monitored by a connected RSS
is assured if an anchor node is augmented in the relative measurements
vector $y=\mathcal{D}^{T}x$. There is no loss of generality in assuming
that the measurement of the state $x_{n}$ is the anchor node of the
connected RSS. 
\begin{lem}
\label{lem:observability_Dt_en_generic}The observation matrix 
\[
\left[\begin{array}{c}
\mathcal{D}^{T}\\
\hline e_{n}
\end{array}\right]
\]
is full rank and renders the LTI system (\ref{eq:LTI-1})
observable.
\end{lem}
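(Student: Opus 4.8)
The plan is to show that the $(\epsilon+1)\times n$ matrix $M=\left[\begin{array}{c}\mathcal{D}^{T}\\\hline e_{n}\end{array}\right]$ has trivial kernel, whence $\mathrm{rank}(M)=n$ by the rank--nullity theorem, and then invoke Proposition \ref{prop:C-rank-observable} to conclude observability. This mirrors the argument for the anchored cycle matrix in Proposition \ref{proposition:Ca_rank}, but now using the general structural facts from Lemmas \ref{lem:rank-Dt} and \ref{lem:kernel-of-Dt} in place of the explicit cycle computation.

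First I would take an arbitrary $v=[v_{1};v_{2};\ldots;v_{n}]\in\mathrm{null}(M)$. Then $Mv=\fatzero_{\epsilon+1}$ splits into the two conditions $\mathcal{D}^{T}v=\fatzero_{\epsilon}$ and $e_{n}v=v_{n}=0$. By Lemma \ref{lem:kernel-of-Dt}, the first condition forces $v\in\mathrm{span}(\fatone_{n})$, i.e., $v=c\fatone_{n}$ for some $c\in\real$ (this is exactly where connectedness of $\mathcal{G}$ enters, via the lemma). The second condition then gives $c=v_{n}=0$, so $v=\fatzero_{n}$. Hence $\mathrm{null}(M)=\{\fatzero_{n}\}$ and $\mathrm{nullity}(M)=0$, so $\mathrm{rank}(M)=n$ and $M$ is full (column) rank.

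With full rank established, the observability claim is immediate: Proposition \ref{prop:C-rank-observable} states that $(A,C)$ is observable whenever $\mathrm{rank}(C)=n$, and here $C=M$ has rank $n$ regardless of $A$. I would also briefly note the geometric interpretation consistent with Theorem \ref{thm:observability_LTI_Dt}: adding the anchor row $e_{n}$ intersects $\mathrm{null}(\mathcal{D}^{T})=\mathrm{span}(\fatone_{n})$ with the hyperplane $\{v:v_{n}=0\}$, and since $\fatone_{n}$ has a nonzero last entry, this intersection is trivial, so no eigenvector of $A$ can survive in the combined kernel.

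I do not anticipate a genuine obstacle here — the content is entirely carried by Lemma \ref{lem:kernel-of-Dt} (which already absorbs the connectedness hypothesis) together with the rank--nullity theorem. The only thing to be careful about is bookkeeping: the augmented matrix has $\epsilon+1$ rows, not $n+1$, so the zero vector on the right-hand side is $\fatzero_{\epsilon+1}$, and the rank--nullity theorem is applied with respect to the $n$ columns, giving $\mathrm{rank}(M)+\mathrm{nullity}(M)=n$; none of this depends on the relation between $\epsilon$ and $n$.
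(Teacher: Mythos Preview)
Your proposal is correct and follows essentially the same route as the paper's own proof: both arguments show the compound matrix has trivial kernel by invoking Lemma~\ref{lem:kernel-of-Dt} to force any vector in $\mathrm{null}(\mathcal{D}^{T})$ into $\mathrm{span}(\fatone_{n})$ and then using the anchor row $e_{n}$ to kill the remaining scalar, after which Proposition~\ref{prop:C-rank-observable} yields observability. Your version is in fact a bit tidier in handling the scalar $c=0$ case and in tracking that the augmented matrix has $\epsilon+1$ rows rather than $n+1$.
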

\begin{proof}
In like manner with previous proofs, in order to show that
\[
rank(\left[\begin{array}{c}
\mathcal{D}^{T}\\
\hline e_{n}
\end{array}\right])=n,
\]
we only need to show that the nullspaces of $\mathcal{D}^{T}$
and $e_{n}$ are disjoint. If this is the case, then the nullity of
the compound observation matrix is zero, and the compound matrix is
full column rank. By Lemma \ref{lem:kernel-of-Dt} it holds $null(\mathcal{D}^{T})=span(\fatone_{n})$,
and all vectors of $null(\mathcal{D}^{T})$ have the form $\alpha\fatone_{n}$
where $\alpha$ is a nonzero real number. Because $e_{n}(\alpha\fatone_{n})\neq0$
for every $\alpha$, there is no element of $null(\mathcal{D}^{T})$
that belongs to $null(e_{n})$. Hence, the two nullspaces are disjoint
and the rank of the compound observation matrix is $n$. Finally,
by Proposition \ref{prop:C-rank-observable} because the observation
matrix is full rank, the instance of the LTI system (\ref{eq:LTI-1})
is observable, and the proof is complete.
\end{proof}
The importance of Lemma \ref{lem:observability_Dt_en_generic} is
the assertion that observability of an LTI system is separate from
the system's dynamics ---that is, independent of the state matrix
$A$--- because the observations map consisting of relative measurements
(of a connected RSS graph $\mathcal{G}$) and an anchor node is full-row
rank. 

We can conveniently extend the postulates of the previous propositions
to the analysis of multi-agent dynamical systems, viz., systems where
their dynamic structure is associated with the topology of a graph
object. The most prominent example of multi-agent coordination is
the consensus protocol (or the agreement dynamics). Here, we investigate
the observability of the agreement-dynamics of a multi-agent dynamical
monitored by an RSS. In the reminder of this section, we account for
the case that the interconnections of the agreement dynamics and the
RSS are represented by the same graph $\mathcal{G}$, where $\mathcal{L}$
and $\mathcal{D}$ are its Laplacian and incidence matrices, respectively.
\begin{cor}
The agreement dynamics 
\begin{align*}
\dot{x} & =-\mathcal{L}x+u\\
y & =\mathcal{D}^{T}x
\end{align*}
are unobservable. 
\end{cor}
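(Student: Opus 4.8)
The plan is to invoke Theorem~\ref{thm:observability_LTI_Dt} with $A = -\mathcal{L}$, which reduces the claim to exhibiting an eigenvector of $-\mathcal{L}$ that lies in $\operatorname{span}(\fatone_n)$, the kernel of $\mathcal{D}^T$. The key observation is that $\fatone_n$ itself is precisely such an eigenvector: the Laplacian has rows that sum to zero, so $\mathcal{L}\fatone_n = \fatzero_n$, hence $-\mathcal{L}\fatone_n = \fatzero_n = 0\cdot\fatone_n$. Thus $\fatone_n$ is an eigenvector of $-\mathcal{L}$ with eigenvalue $0$, and it trivially belongs to $\operatorname{span}(\fatone_n) = \operatorname{null}(\mathcal{D}^T)$ by Lemma~\ref{lem:kernel-of-Dt}.

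First I would recall that $\mathcal{L}$ is a valid state matrix for the pair-based observability criteria developed above, so the hypotheses of Theorem~\ref{thm:observability_LTI_Dt} apply verbatim with $A$ replaced by $-\mathcal{L}$. Second, I would state the Laplacian's zero-row-sum property (already noted in Section~\ref{sec:MathPrelim}: ``the Laplacian's rows sum to zero'' and ``$\lambda_1(\mathcal{G}) = 0$ corresponds to the eigenvector $\fatone_n$'') and conclude $\mathcal{L}\fatone_n = \fatzero_n$. Third, I would observe that this makes $\fatone_n$ an eigenvector of $-\mathcal{L}$ (eigenvalue $0$) that sits inside $\operatorname{null}(\mathcal{D}^T)$, so the ``if and only if'' condition of Theorem~\ref{thm:observability_LTI_Dt} fails, giving unobservability. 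Equivalently, one can phrase it via the PBH test directly: at $s=0$, the vector $\fatone_n$ lies in $\operatorname{null}(0\cdot I_n - (-\mathcal{L})) = \operatorname{null}(\mathcal{L})$ and in $\operatorname{null}(\mathcal{D}^T)$ simultaneously, so the compound matrix drops rank.

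There is essentially no obstacle here --- this corollary is a one-line consequence of the preceding theorem once one notices that $\fatone_n$ is common to the kernels of both $\mathcal{L}$ and $\mathcal{D}^T$. The only point worth a sentence of care is making explicit that this holds \emph{regardless} of whether the graph $\mathcal{G}$ is connected, and that it is the \emph{shared} consensus/relative-measurement graph structure that is responsible: the unobservable mode is exactly the agreement direction $\fatone_n$, which is invisible to relative measurements and is also invariant (indeed, a zero mode) under the consensus dynamics. I would close by remarking that this motivates the next section's study of adding a single absolute measurement to break this particular degeneracy.
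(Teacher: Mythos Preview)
Your proposal is correct and follows essentially the same route as the paper: both invoke Theorem~\ref{thm:observability_LTI_Dt} after noting that $\fatone_n$ lies in the kernel of $\mathcal{L}$ (hence is an eigenvector of $-\mathcal{L}$) and in $\operatorname{null}(\mathcal{D}^T)$. Your write-up is slightly more explicit about the eigenvector interpretation and the remark on connectedness, but the core argument is identical.
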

\begin{proof}
The proof is straightforward. From existing knowledge, the kernel
of the Laplacian matrix $\mathcal{L}$, of a connected graph with
$n$ vertices, is $span(\fatone_{n})$ (see \cite[pg.279]{godsil2013algebraic}
and \cite[pg.27]{mesbahi2010graph}). It follows that the matrices
$\mathcal{L}$ and $\mathcal{G}$ have the same kernel; therefore,
the matrix pair $\left(-\mathcal{L},\mathcal{D}^{T}\right)$ is unobservable
by Theorem \ref{thm:observability_LTI_Dt}. 
\end{proof}
According to Lemma \ref{lem:observability_Dt_en_generic}, if at least
one anchor node is added to the observation vector of the RSS, the
matrix pair $\left(-\mathcal{L},\mathcal{D}^{T}\right)$ becomes observable. 

Hitherto, the centerpiece of our analysis has been the relative observation
matrix, $\mathcal{D}^{T}$, and vector, $\mathcal{D}^{T}x$, of an
RSS's observation graph $\mathcal{G}$. A distinctive observation
matrix of like importance is the Laplacian matrix, $\mathcal{L},$
of the graph $\mathcal{G}$. It is reminded that the two matrices
are associated by the relationship $\mathcal{L}=\mathcal{D}\mathcal{D^{T}}$.
We have already explained that the relative observation vector captures
all the available relative measurements of the RSS. Let us define
$y_{i}^{j}=x_{j}-x_{i}$ as the relative measurement of sensing node
$j$ with respect to node $i$; and $\mathcal{N}_{i}$ the set of
adjacent (neighboring) sensing nodes to $i$, determined by the graph
$\mathcal{G}$. For the observation vector $\mathcal{L}x$, after
some elementary algebra, one can show that its $i$-th element is 
\[
\sum_{j\in\mathcal{N}_{i}}y_{i}^{j}=\sum_{j\in\mathcal{N}_{i}}\left(x_{j}-x_{i}\right),
\]
which is the sum of relative information that the sensing node $i$
receives from its neighbors, and it is the observation (or output)
that the sensing node $i$ has available for decision-making (estimation
or control). The following theorem establishes the equivalency in
observability between the two specific observation matrices. 
\begin{thm}
\label{thm:observ-L-D}The matrix pair ($A$,$\mathcal{D}^{T}$) is
unobservable if and only if the pair ($A$,$\mathcal{L}$) is unobsrevable. 
\end{thm}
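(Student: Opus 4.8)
The plan is to reduce the statement to a comparison of kernels. By the eigenvector test of Theorem~\ref{thm:right_eigen}, unobservability of a pair $(A,C)$ depends on $C$ only through $\mathrm{null}(C)$: the pair is unobservable precisely when some eigenvector of $A$ lies in $\mathrm{null}(C)$. Hence, if I can show $\mathrm{null}(\mathcal{D}^{T}) = \mathrm{null}(\mathcal{L})$, then the condition ``there is an eigenvector of $A$ in $\mathrm{null}(\mathcal{D}^{T})$'' and the condition ``there is an eigenvector of $A$ in $\mathrm{null}(\mathcal{L})$'' are literally the same condition, which yields the claimed equivalence at once.

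First I would use the identity $\mathcal{L} = \mathcal{D}\mathcal{D}^{T}$ to prove $\mathrm{null}(\mathcal{D}^{T}) = \mathrm{null}(\mathcal{L})$. One inclusion is immediate: if $\mathcal{D}^{T}v = \fatzero_{n}$, then $\mathcal{L}v = \mathcal{D}(\mathcal{D}^{T}v) = \fatzero_{n}$. For the reverse inclusion, suppose $\mathcal{L}v = \fatzero_{n}$; then $v^{T}\mathcal{L}v = v^{T}\mathcal{D}\mathcal{D}^{T}v = \|\mathcal{D}^{T}v\|^{2} = 0$, which forces $\mathcal{D}^{T}v = \fatzero_{n}$. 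This is the standard ``Gram matrix has the same kernel as the factor'' argument. Alternatively, since $\mathcal{G}$ is connected I could simply quote that $\mathrm{null}(\mathcal{L}) = \mathrm{span}(\fatone_{n})$ (as already invoked in the corollary above) and combine it with Lemma~\ref{lem:kernel-of-Dt}, which gives $\mathrm{null}(\mathcal{D}^{T}) = \mathrm{span}(\fatone_{n})$; either route gives equality of the two null spaces.

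Then I would conclude. Applying Theorem~\ref{thm:right_eigen} to $(A,\mathcal{D}^{T})$ and to $(A,\mathcal{L})$ separately, the former is unobservable iff $A$ has an eigenvector in $\mathrm{null}(\mathcal{D}^{T})$ and the latter iff $A$ has an eigenvector in $\mathrm{null}(\mathcal{L})$; since these null spaces coincide, the two statements are logically equivalent, which is exactly the theorem. (Equivalently, one may phrase the whole argument through Theorem~\ref{thm:observability_LTI_Dt}, which already says $(A,\mathcal{D}^{T})$ is unobservable iff some eigenvector of $A$ lies in $\mathrm{span}(\fatone_{n})$, and then note the identical condition governs $(A,\mathcal{L})$ once $\mathrm{null}(\mathcal{L}) = \mathrm{span}(\fatone_{n})$ is in hand.) There is no genuine obstacle here; the only care needed is the kernel equality $\mathrm{null}(\mathcal{D}\mathcal{D}^{T}) = \mathrm{null}(\mathcal{D}^{T})$, and it is worth remarking that this Gram-matrix identity holds regardless of connectivity, so the equivalence of the two observability properties actually persists even without the standing connectedness assumption.
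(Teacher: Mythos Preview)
Your proposal is correct and follows essentially the same approach as the paper: both rely on the Gram-matrix identity $\mathrm{null}(\mathcal{D}\mathcal{D}^{T})=\mathrm{null}(\mathcal{D}^{T})$ (proved via $v^{T}\mathcal{D}\mathcal{D}^{T}v=\|\mathcal{D}^{T}v\|^{2}$) together with the eigenvector test of Theorem~\ref{thm:right_eigen}. The only difference is organizational---you first isolate the kernel equality and then invoke the eigenvector test once, whereas the paper weaves the kernel computation into each direction separately; your closing remark that connectivity is not actually needed is a valid bonus observation.
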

\begin{proof}
Suppose that matrix pair $(A,\mathcal{D}^{T})$ is unobservable, then
from Theorem \ref{thm:right_eigen} there is an eigenvector $v$ of
$A$ such that
\[
\mathcal{D}^{T}v=0\Rightarrow\mathcal{D}\mathcal{D}^{T}v=0\Rightarrow\mathcal{L}v=0.
\]
Consequently, we have that $v$ also belongs to the kernel of $\mathcal{L}$,
and it follows that the pair ($A$,$\mathcal{L}$) is unobservable.
For the sufficiency we use the identical argument: suppose that $(A,\mathcal{L})$
is unobservable; there is an eigenvector $v$ of $A$ for which we
can write
\begin{align*}
\mathcal{L}v=0\Rightarrow\mathcal{D}\mathcal{D}^{T}v=0 & \Rightarrow v^{T}\mathcal{D}\mathcal{D}^{T}v=0\\
 & \Rightarrow\left(\mathcal{D}^{T}v\right)^{T}\mathcal{D}^{T}v=0\\
 & \Rightarrow||\mathcal{D}^{T}v||=0.
\end{align*}
This gives $\mathcal{D}^{T}v=0$, and $v$ belongs to the
kernel of $\mathcal{D}^{T}$, which yields the matrix pair ($A$,$\mathcal{D}^{T}$)
unobservable. 
\end{proof}
The last theorem states that the observability of an LTI system monitored
by an RSS is all the same if the observation matrix is the relative
observation matrix or the Laplacian of the relative observation graph. 

To sum up, in this section we examine the observability of dynamical
systems monitored by an RSS. For a connected RSS graph, Theorem \ref{thm:observability_LTI_Dt}
declares that observability is achieved if the eigenvectors of the
state matrix do not belong to $span(\fatone)$, while Lemma \ref{lem:observability_Dt_en_generic}
postulates that the addition of even a single anchor node to the collection
of relative measurements of the RSS will render the LTI system observable.
In the following section, we delve into the impact of a single observation
anchor node to the stability of a special class of LTI systems: the
continuous-time agreement dynamics. 

\section{Observability Analysis of the Single Measurement Agreement Dynamics\label{sec:SingleMeas}}
This section deals with the observability property of coordinated
multi-agent dynamic systems that execute the consensus protocol and
only a single anchor node is observed. In this direction, we consider
the agreement dynamics of a multi-agent system, and, without loss
of generality, we further assume that the anchor measurement is the
state $x_{n}$. Thus, the system under consideration has the following
form:
\begin{align}
\dot{x} & =-\mathcal{L}x\label{eq:concensus-single-anchor}\\
y & =e_{n}x.\nonumber 
\end{align}

This problem can be considered as the observability equivalent of
the controllability of single-leader multi-agent systems. In the latter
instance, the goal is to control the entire multi-agent system by
a single control unit; while in the former case, the objective is
to determine the multi-agent system's state by a single measurement.
The duality of the two properties (observability and controllability)
for the multi-agent system (\ref{eq:concensus-single-anchor}) counters
the rest classes of systems presented in this paper, where the study
of observability cast a unique application domain without an evident
controllability counterpart. 

The controllability property of single-leader multi-agent systems
has been rigorously investigated in \cite{tanner2004controllability,rahmani2009controllability,martini2010controllability}.
An element of particular interest is the effect of the leading node's
placement in the graph to the controllability of the agreement dynamics.
In \cite{rahmani2009controllability} it was shown that leader-symmetry
makes the agreement dynamics uncontrollable. Here, we provide an alternative
to \cite{rahmani2009controllability} proof of this statement, where
we show the relation of the symmetry of the graph and the observability
of the single-measurement multi-agent system (\ref{eq:concensus-single-anchor}).

The following Lemma is the single-measurement observability equivalent
of the proposition presented in \cite{rahmani2009controllability}
for the controllability of single-leader agreement dynamics; the derivation
of its proof is repeated for completeness: 
\begin{lem}[{Observability equivalent of \cite[Proposition 5.4]{rahmani2009controllability}}]
\label{lma:zero_component} The single-measurement agreement dynamics
(\ref{eq:concensus-single-anchor}) are unobservable if and only if
at least one of the eigenvectors of $\mathcal{L}$ has a zero component
on the index that corresponds to the measured (anchor) state. 
\end{lem}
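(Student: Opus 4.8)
The plan is to obtain the equivalence as a direct specialization of the eigenvector test, Theorem~\ref{thm:right_eigen}, to the pair $(-\mathcal{L},e_n)$ of system~(\ref{eq:concensus-single-anchor}). Two elementary observations do all the work. First, $-\mathcal{L}$ and $\mathcal{L}$ share exactly the same eigenvectors: $\mathcal{L}\nu=\lambda\nu$ if and only if $(-\mathcal{L})\nu=(-\lambda)\nu$, so ``eigenvector of $-\mathcal{L}$'' and ``eigenvector of $\mathcal{L}$'' are interchangeable. Second, since $e_n=[0,\ldots,0,1]$, for any $\nu=[\nu_1;\ldots;\nu_n]$ we have $e_n\nu=\nu_n$, so the scalar output $e_n\nu$ vanishes precisely when the anchor component $\nu_n$ is zero. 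Moreover $\mathcal{L}$ is real and symmetric, hence all its eigenvalues are real and it admits a full set of real eigenvectors, so the restriction to real eigenvectors built into Theorem~\ref{thm:right_eigen} is harmless here.

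With these in hand the proof is a short two-way implication. For necessity, suppose (\ref{eq:concensus-single-anchor}), i.e. the pair $(-\mathcal{L},e_n)$, is unobservable; Theorem~\ref{thm:right_eigen} produces an eigenvector $\nu$ of $-\mathcal{L}$ with $e_n\nu=0$, that is $\nu_n=0$; by the first observation $\nu$ is also an eigenvector of $\mathcal{L}$, so $\mathcal{L}$ has an eigenvector with a zero component at the anchor index $n$. For sufficiency, start from an eigenvector $\nu$ of $\mathcal{L}$ with $\nu_n=0$; it is an eigenvector of $-\mathcal{L}$ as well, and $e_n\nu=\nu_n=0$, so Theorem~\ref{thm:right_eigen} declares the pair $(-\mathcal{L},e_n)$ --- hence system~(\ref{eq:concensus-single-anchor}) --- unobservable.

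There is no genuinely hard step: the only care needed is to match the existential quantifier of Theorem~\ref{thm:right_eigen} with the wording of the statement and to note that negating $\mathcal{L}$ does not disturb its eigenvectors (connectedness of $\mathcal{G}$, incidentally, is not used). If a phrasing more in keeping with the geometric style of Sections~\ref{sec:SpecialMatrices}--\ref{sec:GeneralRSS} is preferred, the identical content follows from Corollary~\ref{cor:disjoint_eigenspaces} / the PBH test: $(-\mathcal{L},e_n)$ is unobservable iff $null(sI_n+\mathcal{L})\cap null(e_n)\neq\emptyset$ for some $s\in\real$, and since $null(e_n)=\{\nu\in\real^n:\nu_n=0\}$ is a coordinate hyperplane, this says some eigenspace of $-\mathcal{L}$ (equivalently of $\mathcal{L}$) meets that hyperplane in a nonzero vector, i.e. contains an eigenvector with vanishing $n$-th entry. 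As an optional closing remark, whenever $\mathcal{L}$ has a repeated eigenvalue its eigenspace has dimension at least two and the functional $\nu\mapsto\nu_n$ must vanish on some nonzero vector of it, so repeated Laplacian eigenvalues automatically force unobservability of (\ref{eq:concensus-single-anchor}); this recovers the symmetry-type obstruction emphasized in \cite{rahmani2009controllability}.
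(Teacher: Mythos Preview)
Your proof is correct and follows essentially the same approach as the paper: both reduce the statement to a direct application of the eigenvector test (Theorem~\ref{thm:right_eigen}) together with the identity $e_n\nu=\nu_n$. Your version is actually a bit more careful, since you explicitly justify why eigenvectors of $-\mathcal{L}$ coincide with those of $\mathcal{L}$ and why the real-eigenvector restriction is harmless, points the paper's proof leaves implicit.
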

\begin{proof}
Let $\nu_{j}=\left[\nu_{j}^{1};\nu_{j}^{2};\cdots;\nu_{j}^{n}\right]$,
with $j=1,\ldots,n$, be the eigenvector of the Laplacian matrix $\mathcal{L}$.
Theorem \ref{thm:right_eigen} states that the single-measurement
agreement dynamics (\ref{eq:concensus-single-anchor}), with observation
matrix the row vector $e_{n}$, are unobservable if and only if there
exists an eigenvector $\nu_{j}$ of $\mathcal{L}$ such that 
\[
e_{n}\cdot\nu_{j}=0\Rightarrow\nu_{j}^{n}=0.
\]
Thus, if $\mathcal{L}$ has an eigenvector with a zero component on
the index that corresponds to the measured node, the system (\ref{eq:concensus-single-anchor})
is unobservable (sufficiency), and vice versa, i.e., if the system
is unobservable, then $\mathcal{L}$ has at least one eigenvector
with a zero component at its $n^{\text{th}}$ entry (necessity). 
\end{proof}
Lemma \ref{lma:zero_component} is a preparatory step to associate
the observability of the single-measurement agreement dynamics with
the spectral properties of their information-exchange network, encapsulated
by the graph $\mathcal{G}$. The succeeding propositions determine
the observability of the system depending on the location of the anchor
node at the graph. In particular, we intent to establish the relation
between the observability of the system and the symmetry of the graph.
The main result is built gradually, after a sequence of elementary
propositions related to the observability of the system and the spectral
properties of its graph $\mathcal{G}$. The following theorem connects
the observability of system with the eigenvalues of $\mathcal{L}$.
Its original derivation ---in the dual context of controllability
of the single-leader agreement dynamics--- is presented in \cite{rahmani2009controllability}. 
\begin{thm}[{Observability equivalent of \cite[Proposition 5.1]{rahmani2009controllability}}]
\label{thm:eigenvalues} The single-measurement agreement dynamics
(\ref{eq:concensus-single-anchor}) are unobservable if the Laplacian
matrix $\mathcal{L}$ does not have distinct eigenvalues. 
\end{thm}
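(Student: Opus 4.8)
The plan is to exploit the fact that a repeated eigenvalue of $\mathcal{L}$ forces the existence of an eigenvector with a zero in its $n$-th entry, and then invoke Lemma~\ref{lma:zero_component} (equivalently, Theorem~\ref{thm:right_eigen}). First I would recall that $\mathcal{L}$ is real symmetric, hence diagonalizable over $\real$ with real eigenvalues, so for every eigenvalue the algebraic and geometric multiplicities coincide. I would also note that $\mathcal{L}$ and $-\mathcal{L}$ share the same eigenvectors, with negated eigenvalues; therefore reasoning about the eigenvectors of $\mathcal{L}$ is the same as reasoning about those of the state matrix $-\mathcal{L}$ appearing in (\ref{eq:concensus-single-anchor}).

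Next, suppose $\mathcal{L}$ does not have distinct eigenvalues. Then some eigenvalue $\lambda$ has multiplicity at least two, so the eigenspace $null(\lambda I_n - \mathcal{L})$ contains two linearly independent eigenvectors $v_{1}$ and $v_{2}$. From these I would construct a nonzero eigenvector of $\mathcal{L}$ whose $n$-th component vanishes: if $v_{1}^{n}=0$ take $\nu=v_{1}$ (and likewise if $v_{2}^{n}=0$); otherwise set $\nu = v_{1}^{n} v_{2} - v_{2}^{n} v_{1}$. This $\nu$ lies in the same eigenspace $null(\lambda I_n - \mathcal{L})$, it is nonzero because $v_{1},v_{2}$ are linearly independent and the coefficient multiplying $v_{2}$ equals $v_{1}^{n}\neq 0$, and its $n$-th entry is $v_{1}^{n} v_{2}^{n} - v_{2}^{n} v_{1}^{n} = 0$.

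Finally, with such a $\nu$ in hand, $e_{n}\nu = \nu^{n} = 0$, so by Lemma~\ref{lma:zero_component} the single-measurement agreement dynamics (\ref{eq:concensus-single-anchor}) are unobservable. Equivalently, $\nu$ is an eigenvector of the state matrix $-\mathcal{L}$ that belongs to $null(e_{n})$, and Theorem~\ref{thm:right_eigen} yields the conclusion directly.

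The argument is essentially routine; the only step that deserves a word of care is the construction of the eigenvector with a prescribed zero coordinate, i.e.\ verifying that $v_{1}^{n} v_{2} - v_{2}^{n} v_{1}$ is genuinely nonzero. This is exactly where the hypothesis enters: a repeated eigenvalue provides a two-dimensional (or larger) eigenspace, and within a space of dimension at least two one can always annihilate a single prescribed coordinate. Everything else follows from the symmetry of $\mathcal{L}$ and the spectral observability criteria already established earlier in the paper.
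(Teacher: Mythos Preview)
Your argument is correct and follows essentially the same route as the paper: exploit a repeated eigenvalue to find two independent eigenvectors in the same eigenspace and take a linear combination that kills the $n$-th coordinate, then invoke Lemma~\ref{lma:zero_component}. Your combination $v_{1}^{n}v_{2}-v_{2}^{n}v_{1}$ is just a rescaling of the paper's $\nu_{1}+c\nu_{2}$ with $c=-\nu_{1}^{n}/\nu_{2}^{n}$, and your explicit check that this vector is nonzero is a welcome clarification.
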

\begin{proof}
Without loss of generality, assume that $\nu_{1}$ and $\nu_{2}$
are eigenvectors of the eigenvalue $\lambda$ that has multiplicity
two. It is further assumed that the components of $\nu_{1}$ and $\nu_{2}$
that correspond to the index of the anchor node are nonzero; otherwise,
by virtue of Lemma \ref{lma:zero_component}, the system would be
directly unobservable. Because $\mathcal{L}$ is symmetric, it poses
$n$ orthogonal eigenvectors; thus, for the two eigenvectors $\nu_{1}$
and $\nu_{2}$ of $\lambda$, we have $\nu_{1}^{T}\cdot\nu_{2}=0$.
Consider the vector $\nu=\nu_{1}+c\nu_{2}$ with $c\in\real$ that,
by construction, belongs to the eigenspace of $\mathcal{\lambda}$.
It is easy to show that $\nu$ is also an eigenvector of $\mathcal{L}$
because $A\nu=\lambda\nu$. The real constant is selected as $c=-\nu_{1}^{T}e_{n}/\nu_{2}^{T}e_{n}$.
We have already established, by assumption, that $\nu_{1}^{T}e_{n},\nu_{2}^{T}e_{n}\neq0$;
consequently, the constant $c$ is nonzero. For this selection of
$c$, a quick inspection of $\nu$ reveals that its last component
is zero. Therefore, because $\nu$ belongs to the eigenspace of the
eigenvalue $\lambda$ and its last component is zero, in light of
Theorem \ref{thm:right_eigen}, the single-measurement multi-agent
system is unobservable. 
\end{proof}
We will present the concept of graph symmetry by introducing a series
of definitions and statements.
\begin{defn}
A permutation matrix is a square binary matrix with a single nonzero
element in each row and column.
\end{defn}
The $n$-dimensional permutation matrix $P$, when used to pre-multiply
(or post-multiply) a matrix $Q$, of appropriate dimensions, results
in permuting the rows (or columns) of $Q$. For example, if $\left[P\right]_{ij}=1$,
then the $i^{\text{th}}$ row of $Q$ becomes the $j^{\text{th}}$
row of $PQ$. The permutation matrices are orthogonal: for the $n\times n$
permutation matrix $P$ its inverse exists and $PP^{T}=P^{T}P=I_{n}$.
Similar to \cite{Rahmani2008}, we provide the following definition
for the \emph{observation symmetric anchor nodes} of single-measurement
agreement dynamic systems: 
\begin{defn}
The anchor node $x_{i}$, with $i\in\left\{ 1,\ldots,n\right\} $,
is observation symmetric if the Laplacian matrix, $\mathcal{L}$,
of the graph $\mathcal{G}$ of the single-measurement agreement dynamics
system with $y=e_{i}x$ admits a non-identity permutation matrix $\Psi$,
with $\left[\Psi\right]_{ii}=1$, such that
\[
\Psi\mathcal{L}=\mathcal{L}\Psi.
\]
We call the node asymmetric if it does not admit such permutation.
\end{defn}
\begin{figure}
\noindent \begin{centering}
\includegraphics[scale=0.7]{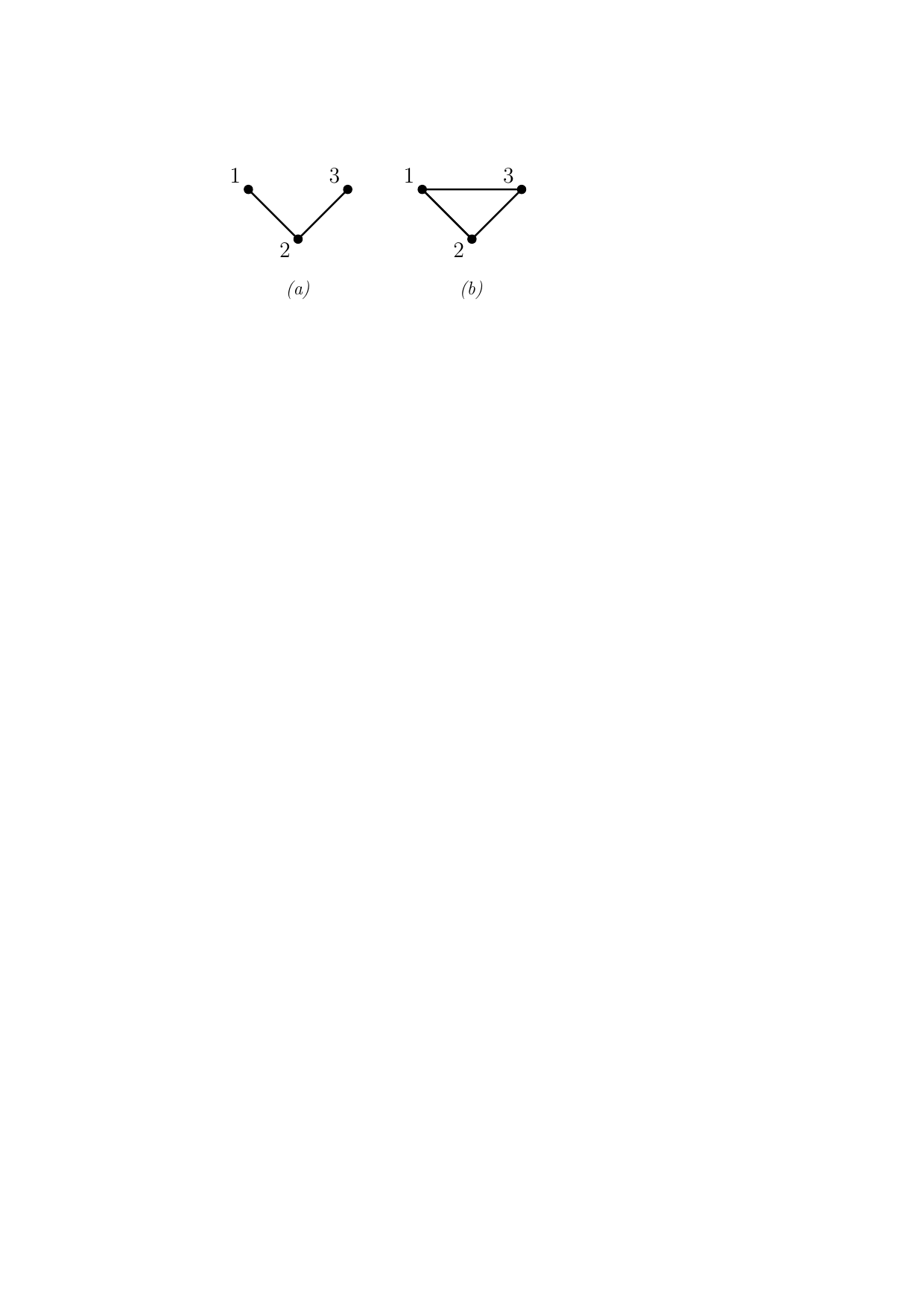}
\par\end{centering}
\caption{Examples of different graph topologies and observation symmetric anchor
nodes: (a) only the anchor node $2$ is observation symmetric; (b)
all the nodes of the graph are observation symmetric. \label{fig:Graph-and-nodes}}
\end{figure}

As an example, node $2$ of the graph illustrated in Figure \ref{fig:Graph-and-nodes}(a)
is observation symmetric and its nodes $1$ and $3$ are asymmetric;
while for the graph of Figure \ref{fig:Graph-and-nodes}(b), all its
nodes are observation symmetric. The following Theorem establishes
the connection between anchor symmetry and the observability property
of single-measurement agreement dynamics systems. 
\begin{thm}
The single-measurement agreement dynamics system (\ref{eq:concensus-single-anchor})
is unobservable if its anchor node is observation symmetric. 
\end{thm}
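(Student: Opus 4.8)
The plan is to exhibit a (real) eigenvector of $\mathcal{L}$ whose component at the index $n$ of the anchor state vanishes, and then to conclude unobservability immediately from Lemma \ref{lma:zero_component} (equivalently Theorem \ref{thm:right_eigen}). By hypothesis the anchor node $x_{n}$ is observation symmetric, so there is a non-identity permutation matrix $\Psi$ with $\left[\Psi\right]_{nn}=1$ and $\Psi\mathcal{L}=\mathcal{L}\Psi$.

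First I would record what $\left[\Psi\right]_{nn}=1$ buys us. Since $\Psi$ is a permutation matrix, every row and every column has exactly one unit entry; as the $(n,n)$ entry equals $1$, the $n$-th row of $\Psi$ is the row vector $e_{n}$, hence $e_{n}\Psi=e_{n}$ and therefore $e_{n}(\Psi-I_{n})=\fatzero_{n}^{T}$. Consequently every vector in the subspace $W:=\mathrm{range}(\Psi-I_{n})$ has vanishing $n$-th component: if $v=(\Psi-I_{n})w$, then $e_{n}v=e_{n}(\Psi-I_{n})w=0$.

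Next I would check that $W$ is a \emph{nonzero}, \emph{$\mathcal{L}$-invariant} subspace. It is nonzero because $\Psi\neq I_{n}$. It is $\mathcal{L}$-invariant because the commutation hypothesis gives $\mathcal{L}(\Psi-I_{n})=\mathcal{L}\Psi-\mathcal{L}=\Psi\mathcal{L}-\mathcal{L}=(\Psi-I_{n})\mathcal{L}$, so $\mathcal{L}v\in W$ whenever $v\in W$. Finally, since $\mathcal{L}$ is real and symmetric, its restriction to the invariant subspace $W$ is again self-adjoint (in coordinates: if the columns of $Q$ form an orthonormal basis of $W$, then $Q^{T}\mathcal{L}Q$ is symmetric), so it has a real eigenvalue $\lambda$ with a real eigenvector $\nu\in W$, $\nu\neq\fatzero_{n}$, satisfying $\mathcal{L}\nu=\lambda\nu$. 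By the previous paragraph the $n$-th entry of $\nu$ is zero, so Lemma \ref{lma:zero_component} forces the single-measurement agreement dynamics (\ref{eq:concensus-single-anchor}) to be unobservable; this also gives an alternative, purely observability-theoretic route to the leader-symmetry statement of \cite{rahmani2009controllability}.

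The only delicate point I anticipate is guaranteeing that the eigenvector we extract genuinely \emph{lies inside} $W$ (so that its $n$-th component is zero), rather than merely being some eigenvector of $\mathcal{L}$; this is precisely where the symmetry of $\mathcal{L}$ is used, since it makes the restriction $\mathcal{L}|_{W}$ diagonalizable over $\real$ (equivalently, $W^{\perp}$ is also $\mathcal{L}$-invariant). Everything else—the structure of $\Psi$'s $n$-th row and the invariance of $W$—is routine bookkeeping with permutation matrices and the commutation relation.
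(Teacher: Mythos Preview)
Your proof is correct and takes a genuinely different route from the paper's. The paper first splits into cases according to whether $\mathcal{L}$ has distinct eigenvalues: if not, it invokes Theorem \ref{thm:eigenvalues}; if so, it argues that each eigenspace is one-dimensional, so $\Psi\nu_{i}=\pm\nu_{i}$ for every eigenvector $\nu_{i}$, and since $\Psi\neq I_{n}$ at least one sign is $-1$, forcing the $n$-th component of that eigenvector to vanish. Your argument bypasses the case split entirely: you work directly with the invariant subspace $W=\mathrm{range}(\Psi-I_{n})$, observe that $e_{n}$ annihilates all of $W$, and then use the symmetry of $\mathcal{L}$ to extract a real eigenvector inside $W$. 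This is both shorter and conceptually cleaner---it does not need Theorem \ref{thm:eigenvalues} as an auxiliary result, and it makes transparent exactly where the symmetry of $\mathcal{L}$ enters (to guarantee a real eigenvector inside the invariant subspace). The paper's approach, on the other hand, is more explicit about \emph{which} eigenvector witnesses unobservability (one of the ``canonical'' eigenvectors $\nu_{i}$ of $\mathcal{L}$ rather than a vector built from a subspace), which can be useful if one later wants to identify the unobservable mode.
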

\begin{proof}
Without loss of generality, we assume that the anchor observation
node is $x_{n}$. It is further assumed that the Laplacian $\mathcal{L}$
has $n$ distinct eigenvalues $\lambda_{i}$, with $i=1,\ldots n$.
If at least one of the eigenvalues has multiplicity greater than one
then, on account of Theorem \ref{thm:eigenvalues}, the multi-agent
system is unobservable. Let $\nu_{i}$ be the eigenvector of $\mathcal{L}$
that corresponds to its eigenvalue $\lambda_{i}$, such that $\mathcal{L}\nu_{i}=\lambda_{i}\nu_{i}$.
Due to the symmetry of matrix $\mathcal{L}$, we can select a set
of orthogonal eigenvectors $\nu_{i}$, with $i=1,\ldots,n$, such
that if $V=\left[\nu_{1},\ldots,\nu_{n}\right]$, we have $VV^{T}=V^{T}V=I$.
Furthermore, because by assumption the anchor node $x_{n}$ is observation
symmetric, there exist a non-identity permutation matrix $\Psi$,
with $\left[\Psi\right]_{nn}=1$, such that $\Psi\mathcal{L}=\mathcal{L}\Psi$.
Using the permutation property of the symmetric graph $\mathcal{G}$,
one has 
\[
\mathcal{L}\left(\Psi\nu_{i}\right)=\left(\mathcal{L}\Psi\right)\nu_{i}=\left(\Psi\mathcal{L}\right)\nu_{i}=\Psi\left(\mathcal{L}\nu_{i}\right)=\lambda_{i}\left(\Psi\nu_{i}\right),
\]
which yields that the vector $\Psi\nu_{i}$ belongs to the eigenspace
of eigenvalue $\lambda_{i}$. Because by assumption ($\mathcal{L}$
has distinct eigenvalues) this subspace is one dimensional; hence,
we infer that $\Psi\nu_{i}$ belongs to the spanning set of the eigenvector
$\nu_{i}$, i.e., $\Psi\nu_{i}\in span(\nu_{i})$. 

An elementary argument can show that the vector $\nu=\nu_{i}-\Psi\nu_{i}$
is also an eigenvector of $\mathcal{L}$ with corresponding eigenvalue
$\lambda_{i}$. More specifically, one has
\begin{align*}
\mathcal{L}\nu & =\mathcal{L}\left(\nu_{i}-\Psi\nu_{i}\right)=\mathcal{L}\nu_{i}-\mathcal{L}\left(\Psi\nu_{i}\right)=\lambda_{i}\nu_{i}-\lambda_{i}\left(\Psi\nu_{i}\right)\\
 & =\lambda_{i}\left(\nu_{i}-\Psi\nu_{i}\right)=\lambda_{i}\nu.
\end{align*}

Recall that for the permutation matrix $\Psi$, we have $\left[\Psi\right]_{nn}=1$;
thus, the $n^{\text{th}}$ component of the eigenvector $\nu$ is
zero, i.e., $\left[\nu\right]_{n}=0$. Consequently, by means of Lemma
\ref{lma:zero_component}, the system is unobservable. 

The same conclusion can be reached with a relatively lengthier succession
of conditionals. We will show this alternate statement for the sake
of completeness. Because the vector $\Psi\nu_{i}$ belongs to the
spanning set of $\nu_{i}$, there exists a real number $l_{i}$ such
that 
\begin{equation}
\Psi\nu_{i}=l_{i}\nu_{i},\label{eq:psi_eigen-1}
\end{equation}
which implies that $\nu_{i}$ is also an eigenvector of $\Psi$ with
corresponding eigenvalue $l_{i}$. Multiplying both sides of (\ref{eq:psi_eigen-1})
with $\nu_{i}^{T}\Psi^{T}=l_{i}\nu_{i}^{T},$ one has 
\[
\left(\nu_{i}^{T}\Psi^{T}\right)\Psi\nu_{i}=l_{i}^{2}\nu_{i}^{T}\nu_{i}\Rightarrow l_{i}^{2}=1.
\]
Therefore, the eigenvalue $l_{i}$ of $\Psi$ can take the values
$\left\{ -1,1\right\} $. If $l_{i}=1$, for all $i\in\{1,\ldots,n\}$,
and taking into consideration that matrix $V$ is orthogonal, we can
write
\[
\Psi\left[\nu_{1},\ldots,\nu_{n}\right]=\left[\nu_{1},\ldots,\nu_{n}\right]\Rightarrow\Psi VV^{T}=VV^{T}\Rightarrow\Psi=I_{n},
\]
which contradicts the initial assumption that $\Psi$ is
a non-identity matrix. Thereby, at least one eigenvalue of $\Psi$
is $-1$, e.g., $l_{i}=-1$, and we have $\Psi\nu_{i}=-\nu_{i}$.
Similarly with before, because $\left[\Psi\right]_{nn}=1$, the $n^{\text{th}}$
component of the eigenvector $\nu_{i}$ is zero; thus, by virtue of
Lemma \ref{lma:zero_component}, the system is unobservable. 
\end{proof}
The concluding remark is that the selection of the anchor node of
a single-measurement agreement dynamics system should be judicious
such that it does not evoke graph symmetry. Furthermore, it is a fact
that a highly connected graph will result to a faster convergence
of the agreement dynamics \cite{mesbahi2010graph}; however, high
connectivity also increases the prospect for observation symmetry
with respect to the measurement (anchor) node.

\section{Design of a Distributed Estimator for the Single-Integrator Multi-Agent
System\label{sec:Distributed-Estimators}}

This section is concerned with the analysis and design of a \emph{distributed
estimator} for the Single-Integrator Multi-Agent (SIMA) system that
is measured by an RSS. A conspicuous application of this case study
is the localization of a group of robots that moves in space. In
this case at hand, each robot (agent) can only measure its relative
location with respect to its adjacent robots, and only one robot can
measure its own global location ---if each state of the system is
perceived as the location of robot in space. 

The observer is tasked to output, in a distributed formalism, an estimate
of the SIMA system's state that converges asymptotically to its actual
value. Each agent's dynamic equation is given by 
\begin{equation}
\dot{x}_{i}=u_{i},\text{ with }i=1,\ldots,n,\label{eq:single-integrator}
\end{equation}
where $x_{i}\in\real$ and $u_{i}\in\real$ is the state
and control input of agent $i$, respectively. For brevity, the single-integrator
dynamics are written as $\dot{x}=u$. In this case study, the agents
can exchange information with other agents. The type of information
that they can exchange will be determined later. The topology of the
information-exchange links is captured by a connected graph $\mathcal{G}$.
We firstly address the observability of the system when the relative
measurements of its observation vector are obtained by an RSS corresponding
to graph $\mathcal{G}$ with Laplacian matrix $\mathcal{L}$. 
\begin{lem}
The SIMA state-space system
\begin{align}
\dot{x} & =u\label{eq:multi-agent-integrator}\\
y & =\mathcal{L}x,\nonumber 
\end{align}
is unobservable.
\end{lem}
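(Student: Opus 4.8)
The plan is to recognize that the SIMA system (\ref{eq:multi-agent-integrator}) is nothing but the generic LTI system (\ref{eq:LTI-1}) specialized to the state matrix $A=O_{n}$ (since $\dot{x}=u$ means the dynamics matrix is the $n\times n$ zero matrix) and the observation matrix $C=\mathcal{L}$. Once this identification is made, the claim follows at once from the eigenvector-based unobservability criteria already proved, so the proof is essentially a one-line invocation of Theorem \ref{thm:right_eigen}.

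Concretely, I would first note that because $A=O_{n}$, every nonzero vector of $\real^{n}$ is an eigenvector of $A$ associated with the eigenvalue $0$; in particular $\fatone_{n}$ is such an eigenvector, since $O_{n}\fatone_{n}=\fatzero_{n}=0\cdot\fatone_{n}$. Next I would recall, as stated in the preliminaries and used throughout Section \ref{sec:GeneralRSS}, that the Laplacian of a connected graph annihilates the all-ones vector, i.e. $\mathcal{L}\fatone_{n}=\fatzero_{n}$, so $\fatone_{n}\in null(\mathcal{L})$. Thus $\fatone_{n}$ is simultaneously an eigenvector of $A=O_{n}$ and an element of $null(C)=null(\mathcal{L})$, and Theorem \ref{thm:right_eigen} immediately yields that the pair $(O_{n},\mathcal{L})$ is not observable, which is the assertion of the lemma.

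Several equivalent routes close the argument just as quickly, and I would mention one or two for robustness: via Proposition \ref{prop:common-kernels-unobservable}, since $null(A)=null(O_{n})=\real^{n}$ and $null(\mathcal{L})=span(\fatone_{n})$, the two kernels intersect nontrivially, forcing unobservability; or via Theorem \ref{thm:observ-L-D} together with Theorem \ref{thm:observability_LTI_Dt}, observing that $\fatone_{n}$ is an eigenvector of $A=O_{n}$ lying in $span(\fatone_{n})$. There is no genuine obstacle here; the only point deserving a moment's attention is the explicit identification of the single-integrator dynamics $\dot{x}=u$ with the degenerate case $A=O_{n}$, after which the earlier machinery makes the conclusion automatic.
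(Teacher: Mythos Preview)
Your proposal is correct and follows essentially the same approach as the paper: identify the SIMA system with $A=O_{n}$, $C=\mathcal{L}$, observe that every nonzero vector (in particular $\fatone_{n}$) is an eigenvector of $O_{n}$, and conclude unobservability from $\fatone_{n}\in null(\mathcal{L})$ via the eigenvector test. The paper phrases the final step as an appeal to Theorem \ref{thm:observability_LTI_Dt}, while you invoke Theorem \ref{thm:right_eigen} (and equivalents) directly, but the substance is identical.
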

\begin{proof}
The SIMA system is a special case of the LTI system $(\ref{eq:LTI-1})$
with $A=O_{n}$, $B=I_{n}$, and $C=\mathcal{L}$. The proof starts
with the observation that $\lambda=0$ is the only eigenvalue of $O_{n}$
with eigenspace the entire $\real^{n}$. Indeed, for every $\nu\in\real^{n}$
it evidently holds $O_{n}\nu=0\nu$. Consequently, the kernel of $\mathcal{L}$,
namely, $span(\fatone_{n})$, is contained in the eigenspace of $O_{n},$
$\real^{n}$. From Theorem \ref{thm:observability_LTI_Dt}, it follows
that the system in (\ref{eq:multi-agent-integrator}) is unobservable.
\end{proof}
According to Lemma \ref{lem:observability_Dt_en_generic}, observability
of the SIMA system monitored by an RSS is warranted with an anchor
node added to its observation vector. With this note, the development
of a distributed observer for the SIMA system follows. Specifically,
we show the procedure to obtain an asymptotic state estimate of the
system when each agent has available relative state measurements and
state estimates of its neighboring nodes, and only a single agent
has available a global (anchor) measurement of its state. 

Without restriction of generality, the state $x_{n}$ is assumed to
be the anchor measurement. The governing dynamics of the \emph{local
observer} for agent $i\in\left\{ 1,\ldots,n-1\right\} $ are
\begin{align}
\dot{\hat{x}}_{i} & =u_{i}-\sum_{j\in\mathcal{N}_{i}}w_{i}^{j}\left[y_{i}^{j}-\left(\hat{x}_{j}-\hat{x}_{i}\right)\right],\label{eq:local-estimator-primary-formulation}
\end{align}
where $\mathcal{N}_{i}$ is the set of adjacent agents to
$i$ determined by the graph $\mathcal{G}$; $\hat{x}_{i}$ is the
local observer's state estimate; $y_{i}^{j}=x_{j}-x_{i}$ is the relative
measurement of agent $j$ with respect to $i$; and $w_{i}^{j}$ is
a positive feedback gain, with $w_{i}^{j}=w_{j}^{i}$ and the pair
$(i,j)$ belonging to the edge set, $\mathcal{E}$, of $\mathcal{G}$.
The edge weights (or gains) are introduced to expedite the convergence
of the estimator dynamics.  Figure \ref{fig:info-sharing} graphically
illustrates the information that an agent exchanges. As it can be
seen from (\ref{eq:local-estimator-primary-formulation}), the global
state, $x_{i}$, does not have a stand-alone presence in the local
observer of agent $i$; rather, feedback terms for the local observer
are the relative measurements, $y_{i}^{j}$, and relative measurements
estimates, $\hat{x}_{j}-\hat{x}_{i}$, with $j\in\mathcal{N}_{i}$.

Recall that that the state $x_{n}$ is an anchor measurement for agent
$n$; hence, the dynamics of the last local observer are given by
\[
\dot{\hat{x}}_{n}=u_{n}-\sum_{j\in\mathcal{N}_{n}}w_{n}^{j}\left[y_{n}^{j}-\left(\hat{x}_{j}-\hat{x}_{n}\right)\right]+K(y_{n}-\hat{x}_{n}),
\]
where $K$ is a positive constant. The last feedback term
is the estimation error of the anchor agent's state. Let $\tilde{x}_{i}=x_{i}-\hat{x}_{i}$
be the error between the actual value of node $i$ and its estimate.
Because $y_{i}^{j}=x_{j}-x_{i}$, the local estimator's dynamics of
(\ref{eq:local-estimator-primary-formulation}) can be written as 
\begin{align}
\dot{\hat{x}}_{i} & =u_{i}-\sum_{j\in\mathcal{N}_{i}}w_{i}^{j}\left(\tilde{x}_{j}-\tilde{x}_{i}\right),\text{ for }i=1,\ldots,n-1;\nonumber \\
\dot{\hat{x}}_{n} & =u_{n}-\sum_{j\in\mathcal{N}_{n}}w_{n}^{j}\left(\tilde{x}_{j}-\tilde{x}_{n}\right)+K\tilde{x}_{n}.\label{eq:estimator-dynamics}
\end{align}

The ``global'' state of the distributed estimator and the estimation
error are denoted by the stack vectors $\hat{x}=\left[\hat{x}_{1};\ldots;\hat{x}_{n}\right]$
and $\tilde{x}=\left[\tilde{x}_{1};\ldots;\tilde{x}_{n}\right]$,
respectively. The dynamics of the distributed estimator are governed
by 
\[
\dot{\hat{x}}=u+\left(\mathcal{L}_{w}+K\Delta_{n}\right)\tilde{x},
\]
and the ``global'' estimator error, $\tilde{x}$, are governed
by
\begin{equation}
\dot{\tilde{x}}=-\left(\mathcal{L}_{w}+K\Delta_{n}\right)\tilde{x},\label{eq:observer-error-dynamics}
\end{equation}
where $\mathcal{L}_{w}=\mathcal{D}W\mathcal{D}^{T}$ is
the weighted Laplace matrix of the RSS graph $\mathcal{G}$, and $\Delta_{n}$
is the $n$-th dimensional diagonal matrix $diag(0,\ldots,0,1)$.
Lastly, $W\in\real^{|\mathcal{E}|\times|\mathcal{E}|}$ is a diagonal
matrix with main diagonal elements the edge weights $w_{i}^{j}$,
and $|\mathcal{E}|$ is the cardinality of the edge set. The orientation
of the observation graph $\mathcal{G}$ does not impact the succeeding
analysis; also, it is not difficult to show that the Laplacian and
the weighted Laplacian of a graph have the same fundamental spaces,
e.g., they have a common kernel. For brevity, we denote the state
matrix of the estimator's error dynamics (\ref{eq:observer-error-dynamics})
by $\Lambda=\mathcal{L}_{w}+K\Delta_{n}$. The following sequence
of statements guarantees the asymptotic convergence of the global
estimator's error.
\begin{lem}
\label{lem:Lambda_eigenvalues_positive}The matrix $\Lambda$ is positive
definite with positive real eigenvalues.
\end{lem}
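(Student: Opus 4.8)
The plan is to exploit the symmetry of $\Lambda$ together with a quadratic-form argument. First I would observe that $\Lambda=\mathcal{L}_{w}+K\Delta_{n}$ is a sum of two real symmetric matrices: $\mathcal{L}_{w}=\mathcal{D}W\mathcal{D}^{T}$ is symmetric because $W$ is diagonal, and $\Delta_{n}=\mathrm{diag}(0,\ldots,0,1)$ is plainly symmetric. Hence $\Lambda$ is symmetric and all its eigenvalues are real. It therefore suffices to prove that $\Lambda$ is positive definite, since for a real symmetric matrix positive definiteness is equivalent to all eigenvalues being strictly positive, which delivers both assertions simultaneously.

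Next I would compute the associated quadratic form. For an arbitrary $x=[x_{1};\ldots;x_{n}]\in\real^{n}$,
\[
x^{T}\Lambda x=x^{T}\mathcal{D}W\mathcal{D}^{T}x+K\,x^{T}\Delta_{n}x=\left(\mathcal{D}^{T}x\right)^{T}W\left(\mathcal{D}^{T}x\right)+K\,x_{n}^{2}.
\]
Since the edge weights $w_{i}^{j}$ are strictly positive, $W$ is a positive definite diagonal matrix, so the first summand is nonnegative and vanishes exactly when $\mathcal{D}^{T}x=\fatzero$; the second summand is nonnegative and vanishes exactly when $x_{n}=0$. Because $K>0$, it follows that $x^{T}\Lambda x\ge 0$ for every $x$, with equality only if $\mathcal{D}^{T}x=\fatzero$ and $x_{n}=0$ hold at the same time.

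Then I would bring in the structural facts already available. By Lemma \ref{lem:kernel-of-Dt}, $\mathcal{D}^{T}x=\fatzero$ forces $x\in\mathrm{span}(\fatone_{n})$, i.e., $x=\alpha\fatone_{n}$ for some $\alpha\in\real$; combined with $x_{n}=0$ this yields $\alpha=0$ and hence $x=\fatzero_{n}$. Therefore $x^{T}\Lambda x>0$ for every nonzero $x$, so $\Lambda$ is positive definite, and consequently its (real) eigenvalues are all strictly positive, which is the claim.

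I do not expect a genuine obstacle here; the only point deserving a word of care is making explicit that the weighted Laplacian $\mathcal{L}_{w}=\mathcal{D}W\mathcal{D}^{T}$ retains the kernel $\mathrm{span}(\fatone_{n})$ of the unweighted $\mathcal{D}^{T}$ — precisely what positive definiteness of $W$ supplies in the quadratic-form step, and what the surrounding text has already noted when identifying the fundamental subspaces of $\mathcal{L}_{w}$ with those of $\mathcal{L}$. If an eigenvector-based phrasing is preferred, it runs in parallel: any eigenvector of $\Lambda$ with eigenvalue $\le 0$ would contradict the strict positivity of the quadratic form just established.
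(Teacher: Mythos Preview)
Your proof is correct and follows essentially the same route as the paper: establish symmetry of $\Lambda$, expand the quadratic form into the weighted-Laplacian piece plus $K x_{n}^{2}$, and argue via $null(\mathcal{D}^{T})=\mathrm{span}(\fatone_{n})$ that the two nonnegative summands cannot vanish together on a nonzero vector. The only cosmetic difference is that the paper phrases the positivity check on eigenvectors via the Rayleigh quotient, whereas you (more cleanly) verify $x^{T}\Lambda x>0$ for every nonzero $x$ directly.
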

\begin{proof}
The matrix $\Lambda$ is symmetric as a sum of two symmetric matrices\footnote{It is trivial to show that $\Lambda=\Lambda^{T}$.};
consequently, its eigenvalues are real (see \cite[Property 2; pg. 295]{Strang1988}).
Now, suppose $\lambda$ be an eigenvalue of $\Lambda$ and $\nu=\left[\nu_{1};\ldots;\nu_{n}\right]$
its corresponding eigenvector. To prove that the symmetric matrix
$\Lambda$ is positive definite, it suffices to show that all its
eigenvalues are positive. For any non-zero eigenvector $\nu$ of $\Lambda$,
belonging to its eigenvalue $\lambda$, we have
\[
\nu^{T}\Lambda\nu=\lambda\nu^{T}\nu=\lambda\left\Vert \nu\right\Vert ^{2}\text{; hence, }\lambda=\frac{\nu^{T}\Lambda\nu}{\left\Vert \nu\right\Vert ^{2}}.
\]
Thereby, to prove that $\Lambda$ is a positive definite
matrix, it is sufficient to show that the quadratic form $\nu^{T}\Lambda\nu$
is positive for any eigenvector $\nu$ of $\Lambda$. The quadratic
form $\nu^{T}\Lambda\nu$ can be written as
\begin{align*}
\nu^{T}\Lambda\nu & =\nu^{T}\mathcal{L}_{w}\nu+K\nu^{T}\Delta_{n}\nu\\
 & =\sum_{(i,j)\in\mathcal{E}}w_{i}^{j}\left(\nu_{i}-\nu_{j}\right)^{2}+K\nu_{n}^{2},
\end{align*}
where it is reminded that $\mathcal{E}$ is the edge set
of $\mathcal{G}$. We will show that the two terms in the right-hand
side of the above equality can not be both zero for any eigenvector
$\nu$ of matrix $\Lambda$. Because by assumption the graph $\mathcal{G}$
is connected, the quadratic sum over the edges of $\mathcal{G}$ becomes
zero if and only if the nonzero eigenvector $\nu$ belongs to the
spanning set of $\fatone_{n}$, i.e., $\nu=\in span(\fatone_{n})$.
However, if $\nu\in span\left(\fatone_{n}\right)$, the term $K\nu_{n}^{2}$
is positive; and vice versa, if $\nu_{n}=0$, then $\nu\notin span(\fatone)$
and the quadratic sum over the edges of $\mathcal{G}$ is positive.
Hence, for any eigenvector $\nu$ of $\Lambda$ corresponding to eigenvalue
$\lambda$ the quadratic form $\nu^{T}\Lambda\nu$ is positive, and
$\Lambda$ has strictly positive eigenvalues. 
\end{proof}
The convergence of the global error dynamics of the distributed observer
is stated at the following proposition.
\begin{prop}
\label{prop:Sima-global-error}The global error dynamics of the distributed
observer given in (\ref{eq:observer-error-dynamics}) are asymptotically
stable.
\end{prop}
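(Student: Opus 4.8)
The plan is to invoke the standard stability criterion for linear time-invariant autonomous systems: the equilibrium $\tilde{x}=\fatzero_{n}$ of $\dot{\tilde{x}}=-\Lambda\tilde{x}$ is asymptotically stable if and only if every eigenvalue of $-\Lambda$ has strictly negative real part, i.e., $-\Lambda$ is Hurwitz. Since Lemma \ref{lem:Lambda_eigenvalues_positive} has already established that $\Lambda$ is symmetric and positive definite with strictly positive real eigenvalues $\lambda_{1},\ldots,\lambda_{n}$, the eigenvalues of $-\Lambda$ are precisely $-\lambda_{1},\ldots,-\lambda_{n}$, all of which are strictly negative reals. Hence $-\Lambda$ is Hurwitz, and the global error dynamics (\ref{eq:observer-error-dynamics}) are asymptotically stable; because the system is linear, the stability is in fact global and exponential.

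Alternatively ---and this is the route I would actually write out, since it makes the global and exponential nature of the convergence explicit--- I would use a quadratic Lyapunov function. Take $V(\tilde{x})=\tfrac{1}{2}\tilde{x}^{T}\tilde{x}=\tfrac{1}{2}\|\tilde{x}\|^{2}$, which is positive definite and radially unbounded. Its derivative along the trajectories of (\ref{eq:observer-error-dynamics}) is $\dot{V}=\tilde{x}^{T}\dot{\tilde{x}}=-\tilde{x}^{T}\Lambda\tilde{x}$. By Lemma \ref{lem:Lambda_eigenvalues_positive}, $\Lambda$ is positive definite, so $\tilde{x}^{T}\Lambda\tilde{x}>0$ for every $\tilde{x}\neq\fatzero_{n}$; thus $\dot{V}<0$ away from the origin, and Lyapunov's direct method (together with the radial unboundedness of $V$) yields global asymptotic stability of $\tilde{x}=\fatzero_{n}$.

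There is essentially no obstacle here: the substantive work was completed in Lemma \ref{lem:Lambda_eigenvalues_positive}. The only points requiring a little care are, first, that the Lyapunov derivative uses $\Lambda=\Lambda^{T}$ so that $\tilde{x}^{T}(-\Lambda)\tilde{x}$ recovers the full quadratic form (automatic here, since $\Lambda$ is already symmetric), and second, if a convergence rate is desired, bounding $\dot{V}\le-\lambda_{\min}(\Lambda)\|\tilde{x}\|^{2}=-2\lambda_{\min}(\Lambda)V$, which by the comparison lemma gives $\|\tilde{x}(t)\|\le\|\tilde{x}(0)\|\,e^{-\lambda_{\min}(\Lambda)t}$, i.e., the estimation error decays exponentially to zero.
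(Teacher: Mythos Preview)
Your proposal is correct and your primary argument is essentially identical to the paper's own proof: invoke Lemma~\ref{lem:Lambda_eigenvalues_positive} to conclude that the eigenvalues of $-\Lambda$ are real and strictly negative, hence $-\Lambda$ is Hurwitz and the error dynamics are asymptotically stable. Your Lyapunov alternative and the explicit exponential rate are a nice bonus that the paper does not include, but they are not needed for the stated result.
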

\begin{proof}
By virtue of Lemma \ref{lem:Lambda_eigenvalues_positive}, the eigenvalues
of the global error dynamics state matrix,$-\Lambda$, of (\ref{eq:observer-error-dynamics})
are real and negative. Therefore, the matrix $-\Lambda$ is Hurwitz
and the dynamics of the global estimation error converge asymptotically
to zero.
\end{proof}
We have detailed the design of a distributed observer for the single-integrator
multi-agent system (\ref{eq:single-integrator}) that renders the
estimation error dynamics asymptotically stable. The available measurements
to each agent are the relative differences between the state variables
of the agent and its neighboring nodes. We have shown, that the stability
of the estimator error dynamics require the availability of at least
one anchor node, namely, a node that can measure the ``global''
value of its state variable.

\begin{figure}
\begin{centering}
\includegraphics[width=0.5\columnwidth]{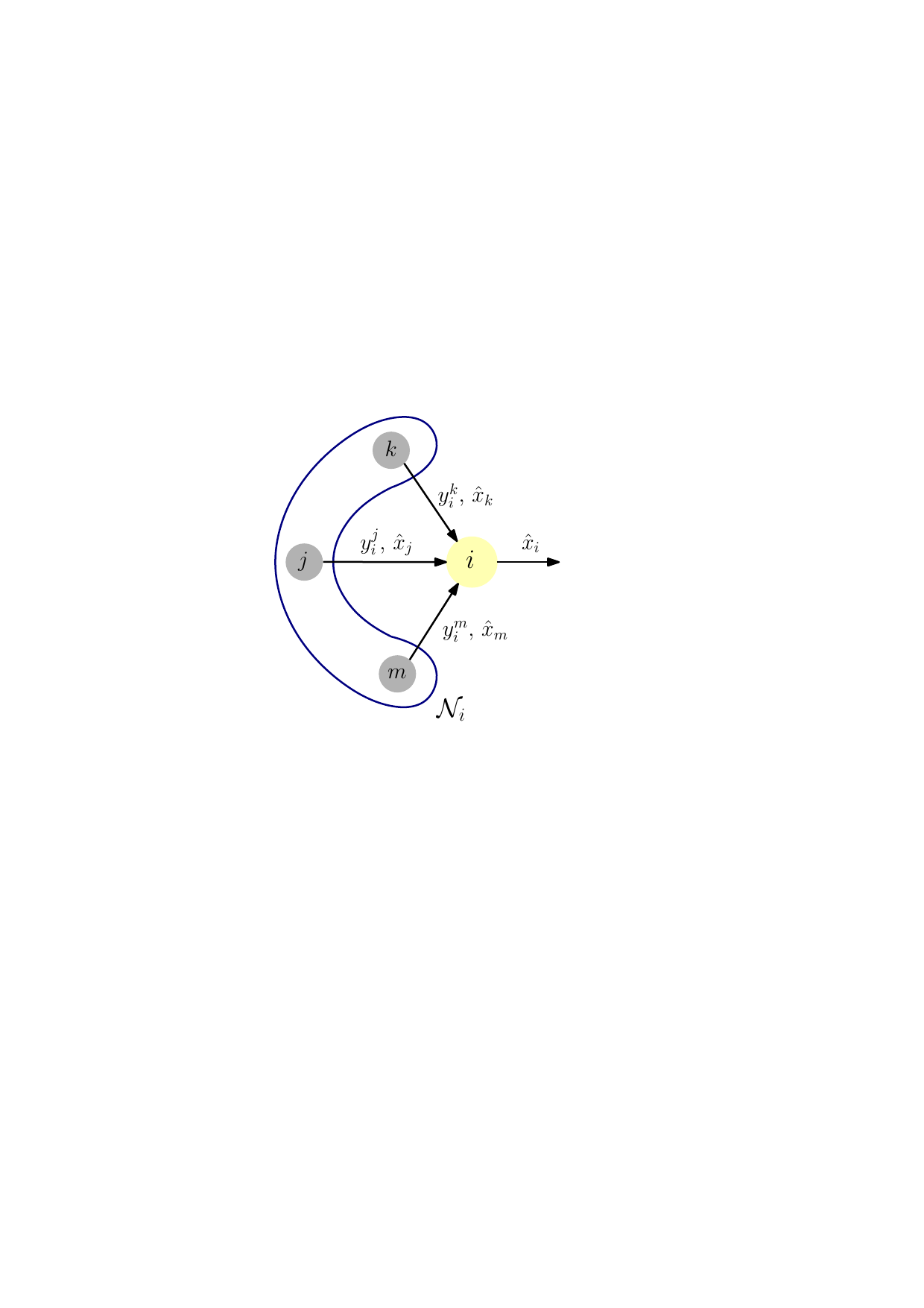}
\par\end{centering}
\caption{Information-sharing links and information-exchanged variables (local
measurements and estimates) between node $i$ and its adjacent nodes
for generating its estimate $\hat{x}_{i}$. \label{fig:info-sharing}}
\end{figure}

\section{Numerical Simulations\label{sec:Simulations}}
In this section, we demonstrate the realization and validity of the
distributed estimator described in Section \ref{sec:Distributed-Estimators}
using numerical simulations. Firstly, we account for a large-scale
system with non-changing dynamics, with state equation $\dot{x}=0$,
and unknown initial conditions. In this facile, non-complex SIMA system,
the input is zero (i.e., $u=0$), and the agent (state) values remain
constant and unknown, equal to their initial condition. 

To showcase our design's performance, regard every agent of the aforementioned
SIMA system as a pixel of a grayscale (black-and-white) image. The
value of each agent ranges from $0$ to $255$, and it represents
the intensity of the pixel: a value of $0$ corresponds to a black
pixel and $255$ to a white. The connectivity of the RSS is determined
by the contiguity of the pixels, that is, every agent ---corresponding
to a pixel--- is adjacent to the agents of its immediate pixels,
in the vertical and horizontal directions, respectively. Ergo, the
agents are connected in such a manner that the relative observation
graph, $\mathcal{G}$, is a grid graph. Recall that the edges of $\mathcal{G}$
represent availability of relative measurements between adjacent agents.
The agent relating to the lower right pixel of the image serves as
the anchor: an anchor node is needed for asymptotic stability of the
monolithic observer error dynamics (see Proposition \ref{prop:Sima-global-error}).
The weights (or gains) of the observer are uniformly selected to be
equal to $1000$, and the weight matrix abbreviates to $W=1000I_{|\mathcal{E}|}$.
The anchor gain, $K$, is also set to $1000$.

For the image that captures the constant agent values of the SIMA
system in point, we use a grayscale, pixelated and downsampled version
of Edward Hopper's Nighthawks painting\footnote{The original colored image, which was used to acquire the simulation
values, in the United States is in the public domain.}, depicted in the rightmost image in the last row of Figure \ref{fig:pixels_state_uno}.
The image resolution is $70\times128$; therefore, the SIMA system
has a total of $8960$ agents. A distributed estimator is designed
according to the guidelines of Section \ref{sec:Distributed-Estimators}.
The initial values of the estimator's state vector are uniformly drawn
from the interval $\left[0\:255\right]$. The double-indexed pixels (indexed
by the row and column number) are simply converted into single-indexed
agents, and the connectivity of the RSS is determined according to
the contiguity of the pixels, as we previously mentioned. The rest
of the images in Figure \ref{fig:pixels_state_uno} show the progressive
convergence of the estimator's state to the actual state of the SIMA
system. The simulation time instance of each frame (viz., state vector)
is depicted on top of every image. We observe that the estimator's
state converges to the actual state in finite time. 

The second simulation case study examines the ability of the observer
to track transient states. Each single-integrator state $x_{i}$ of
(\ref{eq:single-integrator}) tracks the time-varying reference value
$x_{ref}=2\sin\left(5t\right)$, for $t\geq0$, by means of the proportional
control law $u_{i}=-30(x_{i}-x_{ref})$. The distributed estimator's
dynamics are given in (\ref{eq:estimator-dynamics}). The observer's
gain is set to $K=100$. A circle graph was used as the topology of
the sensor network $\mathcal{G}$ that monitors the multi-agent system.
It can be shown that the higher connectivity of $\mathcal{G}$ (larger
$\lambda_{2}$) the faster the convergence of the estimator's dynamics.
A comparison of the actual versus the estimated state in distinct
time instance is given in Figure \ref{fig:pixels_state_uno}. The
actual dynamics exhibit a rapid response, achieving the reference
value $x_{ref}$ substantially faster than the convergence of the
distributed estimator. This observation confirms the inherent time
scale disparity between the system and the observer. The estimated
state and the estimation error with respect to time are shown in Figure
\ref{fig:state_time}.

\begin{figure}[h]
\begin{centering}
\includegraphics[width=0.9\columnwidth]{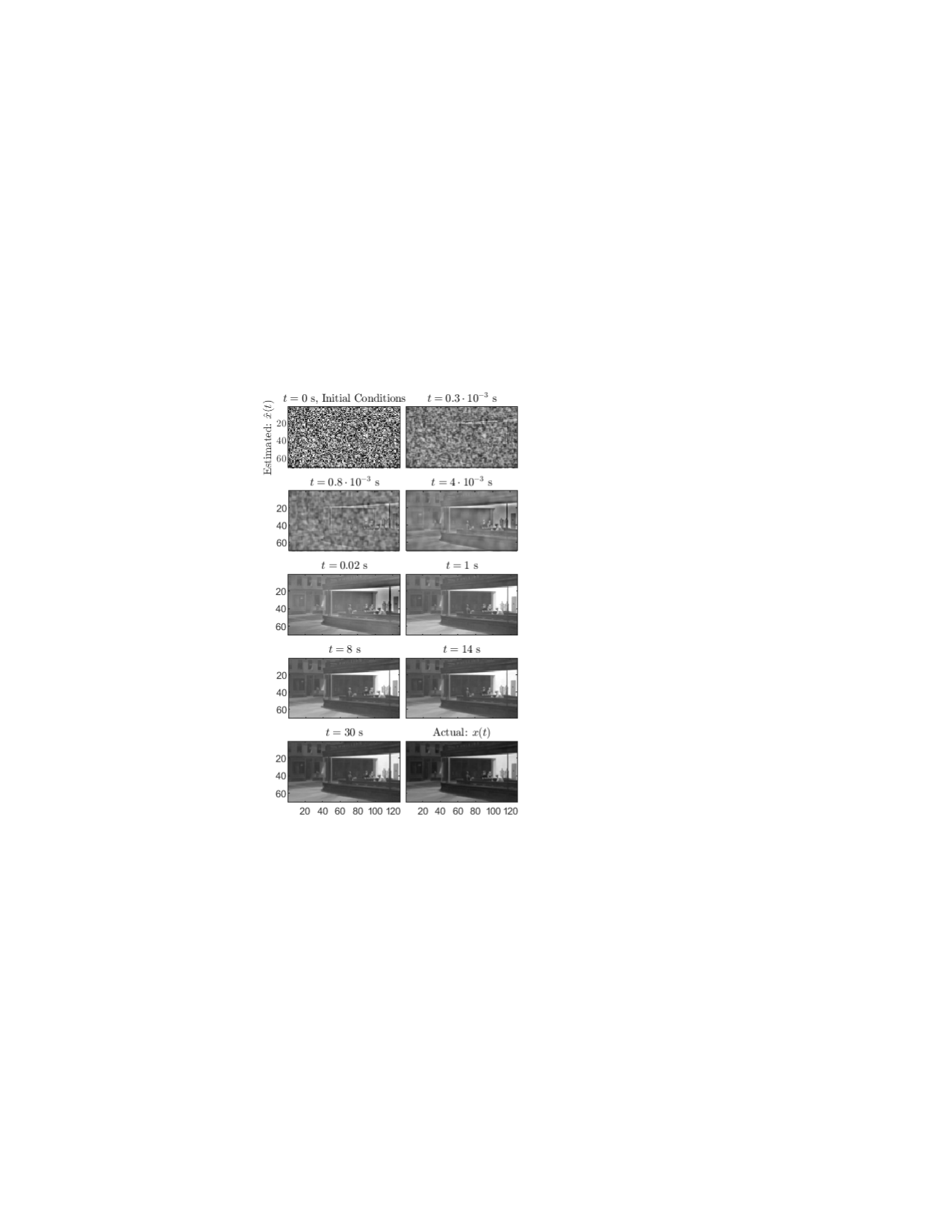}
\par\end{centering}
\caption{Gradual comparison between the actual and estimated states of a SIMA
system at distinct time instances for $n=8960$. Each gray-scale pixel
in the image grid represents the numerical value of a state-space
variable, which maintains a constant value, e.g., $x_{i}(t)=c_{i}$,
where $c_{i}\in\left[0\:255\right]$. The initial state estimates,
$\hat{x}_{i}(0)$, are uniformly drawn from the gray-scale range interval
$\left[0\:255\right]$. \label{fig:pixels_state_uno}}
\end{figure}

\begin{figure}[h]
\begin{centering}
\includegraphics[width=0.9\columnwidth]{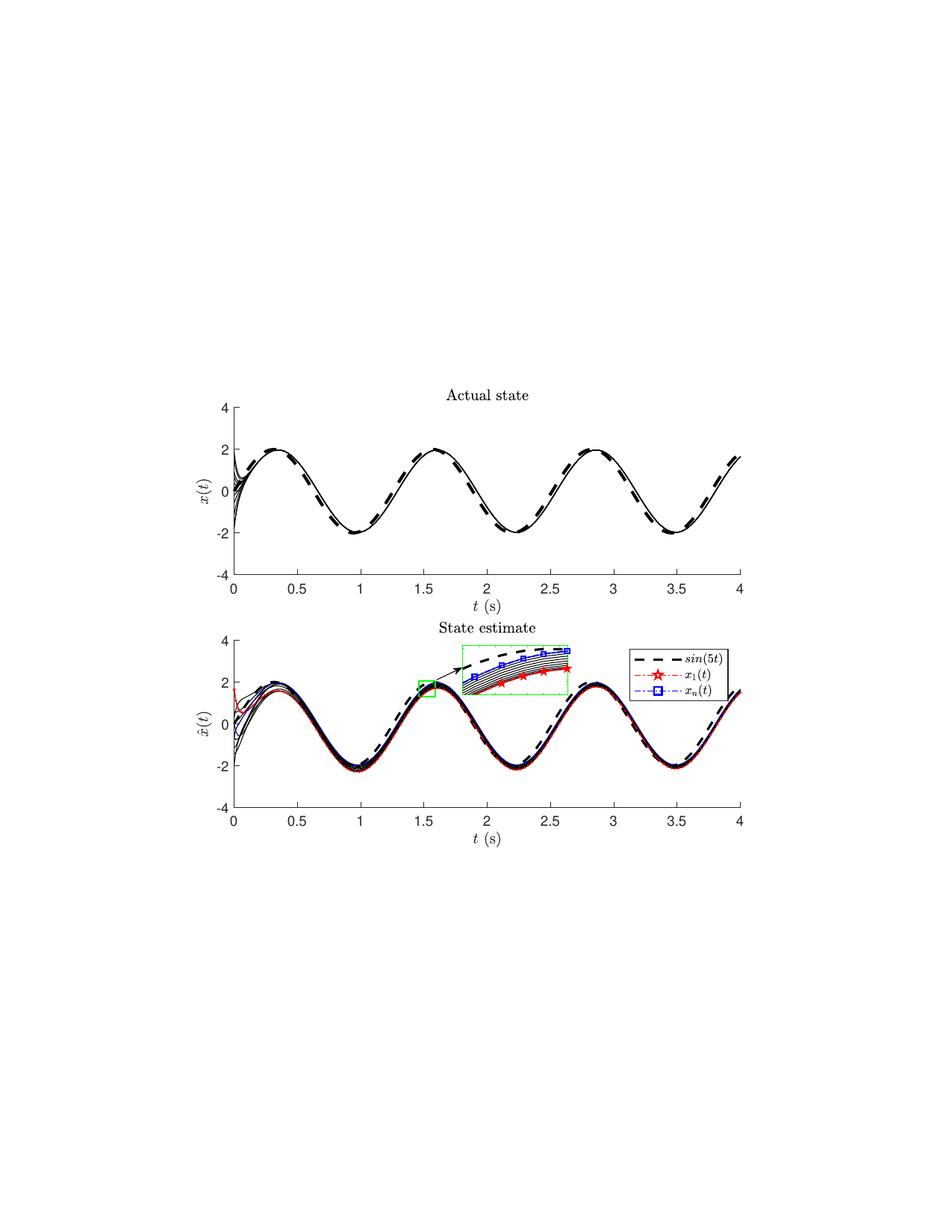}
\par\end{centering}
\caption{Reference time-varying signal (dashed lines), actual states of the
single-integrator (upper plot, solid lines), and estimated states
(lower plot, solid lines), all plotted with respect to time. \label{fig:state_time}}
\end{figure}

\section{Conclusions}

In this paper, we investigate the inferential capacity of RSSs, which
are entities designed to estimate the state of a dynamical system
using relative measurements.

To represent the configuration of relative information availability,
we employ graph theoretical concepts. The matrix representation of
graphs, alongside algebraic graph theory, aligns well with the analysis
of system-theoretic properties in LTI systems. Our approach includes
a vector space, or geometric, perspective to develop a series of verifiable
propositions concerning the observability of LTI systems under RSS
monitoring.

Particular attention is given to a specific category of multi-agent
LTI systems observed by an RSS. We place a special emphasis on consensus
dynamics in two scenarios: one where the structure of the multi-agent
dynamic system and the RSS are congruous, and another where only a
single absolute measurement is available. Ultimately, we design and
examine a distributed observer for a single-integrator system observed
by an RSS. 

\bibliographystyle{IEEEtran}
\bibliography{biblio}


\end{document}